\newtheorem{theorem}{Theorem}
\newtheorem{corollary}{Corollary}
\newtheorem{proposition}{Proposition}
\def\centeron#1#2{{\setbox0=\hbox{#1}\setbox1=\hbox{#2}\ifdim
    \wd1>\wd0\kern.5\wd1\kern-.5\wd0\fi
    \copy0\kern-.5\wd0\kern-.5\wd1\copy1\ifdim\wd0>\wd1
    \kern.5\wd0\kern-.5\wd1\fi}}
\newcommand{\mc}{\mathcal}
\newcommand{\N}{\mathbb}
\newcommand{\be}{\begin{equation}}
\newcommand{\ee}{\end{equation}}
\newcommand{\ba}{\begin{eqnarray}}
\newcommand{\ea}{\end{eqnarray}}
\newcommand{\nn}{\nonumber}
\newcommand\bigX{\makebox(-10,-10){\text{\Huge $X$}}}
\newcommand\bigXdown{\makebox(15,0)[dl]{\text{\Huge $X$}}}
\title{Conditions for the custodial symmetry in multi-Higgs-doublet models}
\author{M.~Aa.~Solberg\\
Department of Mechanical and Industrial Engineering,\\ NTNU, 7491 Trondheim, Norway. \\
E-mail: marius.solberg@ntnu.no 
}
\begin{document}
\date{}
\maketitle
\begin{abstract}
We derive basis-independent, necessary and sufficient conditions for the custodial symmetry in N-Higgs-doublet models
(NHDM) for $N\geq 3$, and apply them on some 3HDM examples.
\end{abstract}
\newpage
\tableofcontents
\newpage

\section{Introduction}
\label{sec:Intro}
The custodial 
symmetry in the standard model (SM) is an approximate symmetry which guards
the $\rho$ parameter
\ba
    \rho =\frac{m_W^2}{m_Z^2\cos^2(\theta_W)},
\ea
from large radiative corrections \cite{Sikivie:1980hm}. Here $m_W$ and $m_Z$ are the masses of the $W$ and $Z$ bosons, while $\theta_W$ is the weak mixing angle.  At the first order of perturbation theory (tree-level), we have 
\ba\label{E:rho1}
\rho=1, 
\ea
since $\cos^2(\theta_W)=m_W^2/m_Z^2$ at tree-level.
Eq.~\eqref{E:rho1} holds at all orders of perturbation theory when the custodial symmetry is exact. The $\rho$ parameter is measured experimentally 
to be close to unity, with a minute dependence of the chosen renormalization prescription.
For instance, if we interpret $\rho$ in the
minimal subtraction renormalization scheme at the the energy scale $m_Z$,
$\rho$ takes the value $\rho=1.01032 \pm 0.00009$ 
\cite{Patrignani:2016xqp}.

The custodial symmetry is a $SO(4)$ symmetry of the Higgs sector, broken down to $SO(3)$ by spontaneous symmetry breaking. In the SM, the Higgs potential can easily be seen to be $O(4)$-symmetric. 
Let the SM Higgs Lagrangian be given by
\ba\label{E:SMlag}
  \mc{L_H}= (D_\alpha \Phi)^\dag(D^\alpha \Phi) -V_\text{SM}(\Phi),
\ea
where 
\ba
\Phi = \begin{pmatrix} \phi^+ \\ \phi^0   \end{pmatrix}=\begin{pmatrix} \phi_1 +i\phi_2 \\ v+\phi_3+i\phi_4   \end{pmatrix}
\ea 
is the SM Higgs doublet, and the fields $\phi_1,\ldots,\phi_4$ are real scalar fields and $v$ is the vacuum expectation value (VEV).
Moreover, in \eqref{E:SMlag} $D^\alpha$ is the covariant  derivative 
\ba
    D^\alpha =\partial^\alpha+ \frac{ig}{2}\sigma_j W_j^\alpha+\frac{ig'}{2} B^\mu,
\ea
$V(\Phi)$ is the Higgs potential
\ba\label{E:SMpot}
    V_\text{SM}(\Phi)= \lambda (\Phi^\dag \Phi)^2 +\mu^2 \Phi^\dag \Phi,
\ea
  and $\lambda$ and $\mu$ are constants. 
If we organize the Higgs doublet $\Phi$ as a real quadruplet, 
\ba\label{E:realQuadrSM}
    \Phi_r=\begin{pmatrix} \text{Re}(\Phi) \\  \text{Im}(\Phi) \end{pmatrix},
\ea
 the potential $V_\text{SM}(\Phi_r)$, given by \eqref{E:SMpot} with the substitution $\Phi \to \Phi_r$, is evidently invariant under  transformations 
\ba
\Phi_r\to O\Phi_r,
\ea
with $O\in O(4)$. The kinetic terms of \eqref{E:SMlag}
will, in the limit $g'\to 0$, be invariant under $SO(4)$ transformations  \cite{Olaussen:2010aq}, but not under $O\in O(4)$ with $\det(O)=-1$ \cite{Solberg:2012au}. This approximate $SO(4)$ symmetry is
the custodial $SO(4)$ symmetry, which we will denote $SO(4)_C$. In the presence of a VEV, this symmetry is broken down to a custodial $SO(3)$ symmetry,
denoted by $SO(3)_C$.

If we impose the custodial symmetry on the SM Lagrangian, it forces $\rho=1$ at all levels of perturbation theory. To see this, consider the mass-squared matrix $\mc{M}^2$ of the electroweak gauge bosons $W^\pm$ and $Z^0$, 
\ba\label{E:MassMsqWZ}
   \mc{M}^2= \frac{v^2}{2}\begin{pmatrix} g^2 & 0&0 &0 \\0 &g^2 &0 &0 \\0 &0 &g^2 &-g g' \\
   0&0&-gg'&g'^{2}  \end{pmatrix},
\ea
  where $g$ is the weak isospin coupling and $g'$ is the $U(1)_Y$ hypercharge coupling.
When the custodial symmetry is enforced on the kinetic Higgs terms the hypercharge coupling must be zero, $g'\to 0$, and then the three massive gauge bosons will transform as a triplet under custodial $SO(3)_C\subset SO(4)_C$, where $SO(3)_C$ here leaves the vacuum invariant (it leaves the VEV alone).
Then $m_W=m_Z$ at all orders of perturbation theory, since the massive gauge boson fields can be interchanged
by a $SO(3)_C$ transformation. If the custodial symmetry is extended to the Yukawa sector, the mass renormalization of the gauge bosons, due to massive fermions, will yield the same result for all the massive gauge bosons, and hence the mass degeneration of $W^\pm$ and $Z^0$ will still be exact \cite{Willenbrock:2004hu}.  
 Moreover, $\theta_W=0 $ at all orders since $g'=0$ implies no electroweak mixing. Hence $\cos^2\theta_W=1$, which gives us $\rho=1$ at all orders, when the theory is custodially symmetric.       
\subsection{N-Higgs-doublet models}
\label{sec:NHDM}
Augmentations of the scalar sector to include several Higgs doublets can be applied to implement $CP$ violation  
and dark matter. Higgs doublets generally respect the custodial symmetry, except for certain combinations
of the doublets associated with complex parameters. Given $N$ Higgs doublets $\Phi_1,\ldots,\Phi_N$,
we will construct the most general NHDM potential from the following Hermitian bilinears, linear in each field $\Phi_m, \Phi_n$:
\ba\label{E:bilinears}
    \widehat{A}_m &=& \Phi_m^\dag \Phi_m, \nn \\
  \widehat{B}_{mn} &=& \frac{1}{2}(\Phi_m^\dag \Phi_n+\Phi_n^\dag \Phi_m) \equiv \widehat{B}_a, \nn \\
\widehat{C}_{m n}&=& \frac{-i}{2}(\Phi_m^\dag \Phi_n-\Phi_n^\dag \Phi_m)\equiv \widehat{C}_a.
\ea  
To avoid double counting, we will let $1\leq m<n \leq N$, and we apply the following invertible encoding $a=a(m,n)$ to label such pairs,
\ba\label{E:encoding}
 1 \leq a(m,n) =  (n-1) + (m-1)\left(N-\frac{1}{2}(m+2)\right)\leq \frac{1}{2} N(N-1)\equiv k.
\ea  
The inverse of this encoding will be denoted $(m(a),n(a))$.
The bilinears $\widehat{B}_{m n}$ and $\widehat{C}_{m n}$ will be ordered "lexicographically" in $(m,n)$ by the encoding $a(m,n)$, e.g.
\ba
\{\widehat{C}_a\}_{a=1}^k=\{\widehat{C}_{12},\widehat{C}_{13},\ldots,\widehat{C}_{1 N},\widehat{C}_{23},\widehat{C}_{24},\ldots, \widehat{C}_{N-1,N}\}.
\ea
The most general NHDM potential can then be written,
\begin{align}\label{E:GeneralPotNHDM}
 V(\Phi_1, \ldots,\Phi_N) &= \mu_m^{(1)}\widehat{A}_m
 +\mu_{a}^{(2)}\widehat{B}_{a} + \mu_{a}^{(3)}\widehat{C}_{a}
 + \lambda_{mn}^{(1)}  \widehat{A}_m\widehat{A}_n +
 \lambda_{ab}^{(2)}\widehat{B}_{a}\widehat{B}_{b}\nonumber \\ 
  &+ \lambda_{ab}^{(3)} \widehat{C}_{a}\widehat{C}_{b} 
 + \lambda_{ma}^{(4)} \widehat{A}_{m}\widehat{B}_{a} 
 + \lambda_{ma}^{(5)}  \widehat{A}_{m}\widehat{C}_{a}
 + \lambda_{ab}^{(6)} \widehat{B}_{a} \widehat{C}_{b},
\end{align}
where repeated indices $m$ and $n$ are summed from $1$ to $N$, while repeated indices $a$ and $b$ are summed from $1$ to $k$, confer \eqref{E:encoding}.
The general NHDM Lagrangian can then be written
\ba\label{E:NHDMlag}
  \mc{L}_\text{NHDM}= (D_\alpha \Phi_m)^\dag(D^\alpha \Phi_m) - V(\Phi_1, \ldots,\Phi_N),
\ea
where $V$ was given in \eqref{E:GeneralPotNHDM}, and where we as usual sum over repeated indices.
However, the general NHDM Lagrangian has more sources of custodial symmetry violation than just
$g'$. The bilinears $\widehat{C}_{a}$ all violate the custodial symmetry, and hence the parameters
$\mu_{a}^{(3)}, \lambda_{ab}^{(3)}, \lambda_{ma}^{(5)}$ and $\lambda_{ab}^{(6)}$ will give contributions to $\Delta \rho = \rho -1$.
Consider transformations of the real NHDM quadruplets, given by 
\ba\label{E:realQuadrNHDM}
   \Phi_{m,r} = \begin{pmatrix} \text{Re}(\Phi_m) \\  \text{Im}(\Phi_m) \end{pmatrix} \to \Phi_{m,r}' =S  \Phi_{m,r},
\ea 
 as in \eqref{E:realQuadrSM} for the SM.
In section 2.2 of \cite{Olaussen:2010aq} we showed that the transformations $S$ which left the bilinears $\widehat{C}_{a}$ invariant, were the transformations $S\in Sp(2,\N{R})$, i.e.~the symmetry group of the bilinears $\widehat{C}_{mn}$ is the real symplectic group $Sp(2,\N{R})$. The symmetry group of $\widehat{C}_{a}\widehat{C}_{b}$ was the group $P(2,\N{R})$, with $Sp(2,\N{R})$ as identity component. 
Both symmetry groups are incompatible with the custodial symmetry, see figure \ref{figmulti}, and hence violate the custodial symmetry.
\begin{figure*}
\begin{center}
\includegraphics[width=1\textwidth]
{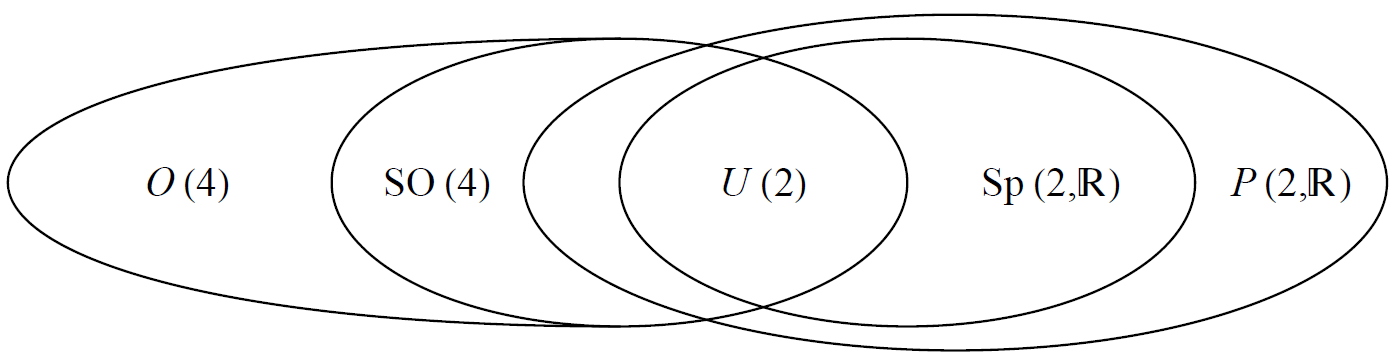}%
\end{center}
\caption{Diagram showing the incompatibility between the custodial symmetry and the symmetry groups of bilinears $\widehat{C}_{a}$ and quartic terms $\widehat{C}_{a}\widehat{C}_{b}$. The $SO(4)$ symmetry group is the custodial symmetry, $U(2)\cong SO(4)\cap Sp(2,\N{R}) \cong SU(2)_L\times U(1)_Y$ is the global symmetry of the SM, $Sp(2,\N{R})$ is the symmetry group of bilinears $\widehat{C}_a$, while 
$P(2,\N{R})$ is the symmetry group of quartic terms $\widehat{C}_a\widehat{C}_b$. Finally, $O(4)$ is the symmetry group of the bilinears $\widehat{B}_a$. Taken from \cite{Solberg:2012au}.
\label{figmulti}}
\end{figure*}
The most general custodially symmetric NHDM potential, is then given by
\begin{align}\label{E:potNHDMSO4inv}
 V_\text{CS}(\Phi_1, \ldots,\Phi_N) &= \mu_m^{(1)}\widehat{A}_m
 +\mu_{a}^{(2)}\widehat{B}_{a}
 + \lambda_{mn}^{(1)} \widehat{A}_m\widehat{A}_n
 + \lambda_{ab}^{(2)}\widehat{B}_{a}\widehat{B}_{b}\nonumber\\
 &+ \lambda_{ma}^{(4)} \widehat{A}_{m}\widehat{B}_{a}.
\end{align}
We will denote a NHDM potential of the form \eqref{E:potNHDMSO4inv}  \emph{manifestly $SO(4)_C$-symmetric}.
 However, since the Higgs doublets $\Phi_1,\ldots,\Phi_N$ have the same quantum numbers,
we are free to redefine the Higgs doublets through unitary \emph{Higgs basis transformations},
\ba\label{E:HbasisTrafo}
    \Phi_m\to \Phi_m'=U_{mn}\Phi_n,
\ea
for an $U\in U(N)$. The kinetic terms of the NHDM Lagrangian are invariant under $SU(2)_L \times U(N)$ transformations (promoted to $SU(2)_L \times Sp(N)$ in the limit $g'\to 0$ \cite{Olaussen:2010aq}),
and the basis transformations $U\in U(N)$ will thus leave the kinetic terms invariant. Normally only Higgs basis transformations $U\in SU(N)$ are considered, since an overall $U(1)$ transformation do not change the parameters of the potential. However, the NHDM potential will generally not be invariant under a $SU(N)$ change of Higgs basis, and hence the custodial symmetry can be hidden due to a change of basis. On the other hand, a NHDM potential that is not manifestly $SO(4)_C$-symmetric may be transformed into a manifestly $SO(4)_C$-symmetric potential through a $SU(N)$ basis transformation. Hence we define a potential to be \emph{explicitly} custodially symmetric, or more simple, \emph{$SO(4)_C$-symmetric},
if it can be transformed into a manifestly $SO(4)_C$-symmetric potential by a Higgs basis transformation. Thus a $SO(4)_C$-symmetric potential can be transformed to the form \eqref{E:potNHDMSO4inv}. On very general grounds, an implementation of the custodial symmetry
in a NHDM potential can be transformed to a manifest $SO(4)_C$ symmetry, i.e.~the potential can be transformed to the form \eqref{E:potNHDMSO4inv}, by a Higgs $SU(N)$ basis transformation \cite{Nishi:2011gc}.  

We will in this article develop necessary and sufficient conditions for the custodial symmetry in arbitrary NHDM potentials for $N>2$. For $N=2$, necessary and sufficient conditions for $SO(4)_C$ symmetry were derived independently in \cite{Grzadkowski:2010dj} and \cite{Haber:2010bw}. Necessary and sufficient conditions for $SO(4)_C$ in the case $N=3$, and necessary conditions
for the cases $N=4$ and $N=5$ are given in \cite{Nishi:2011gc}, in a different formalism than the one applied in this article.
 The cases $N\geq 3$ are different from the case $N=2$, since the bilinears \eqref{E:bilinears} transform under the adjoint representation $Ad_{SU(N)}$ of $SU(N)$, and $Ad_{SU(2)}=SO(3)$ while 
\ba
Ad_{SU(N)} 
\subsetneq  SO(N^2-1) \quad \text{for} \quad N>2.
\ea
Hence most $SO(N^2-1)$ matrices will not be at our disposal for $N>2$, and this will complicate the procedure, when we try to rotate a possibly $SO(4)_C$-symmetric potential into a manifestly $SO(4)_C$-symmetric potential.
\section{Basis-independent conditions for custodial symmetry in NHDM}
\label{sec:BasisIndependentConditionsForCustodialSymmetryInNHDM}
For a discussion of the special case where the quartic terms can be factorized, see appendix \ref{sec:FactorizableQuarticTerms}.

\subsection{The general NHDM potential}
\label{sec:TheGeneralNHDMPotential}
We will now find basis-independent, sufficient and necessary conditions for having a custodially symmetric potential in the general NHDM. To do this, we will adopt much of the notation applied in \cite{Maniatis:2015gma}. 

Now define
\begin{align}
	\vec{\Phi}=(\Phi_1,\Phi_2,\ldots,\Phi_N)^T,
\end{align}
and let the Hermitian $N\times N$ matrix $\tilde{K}$ be given by
\begin{align}
	\tilde{K} = \vec{\Phi} \vec{\Phi}^\dag =
\begin{pmatrix} 
\Phi_1^\dag \Phi_1 & \Phi_2^\dag \Phi_1 &\ldots & \Phi_N^\dag \Phi_1 \\
\Phi_1^\dag \Phi_2 & \Phi_2^\dag \Phi_2 &\ldots & \Phi_N^\dag \Phi_2 \\
\vdots & & \ddots & \vdots \\
\Phi_1^\dag \Phi_N & \Phi_2^\dag \Phi_N &\ldots & \Phi_N^\dag \Phi_N
\end{pmatrix}
\end{align}
Moreover, let $\lambda_\alpha$ be generalized Gell-Mann matrices, which is a basis for the real vector space of Hermitian $N\times N$ matrices. 
Then the $N^2$ linearly independent Hermitian bilinears in the fields $\Phi_1,\ldots,\Phi_N$ can be written
\begin{align}\label{E:Kalpha}
	K_\alpha=\text{Tr}(\tilde{K}\lambda_\alpha).
\end{align}
As in \cite{Maniatis:2015gma}, we will let Greek indices like $\alpha$ run from $0$ to $N^2-1$, while Latin indices like $a$
run from $1$ to $N^2-1$. Summation of repeated indices is, as usual, also assumed.  
We will define the matrices $\lambda_\alpha$ such that
the $SO(4)_C$-violating bilinears $\widehat{C}$ are ordered first, that is
\begin{align}\label{E:custViolBilinsInKa}
	K_a = 2\widehat{C}_a = 2\widehat{C}_{m(a),n(a)}, \quad \text{for} \; 1 \leq a \leq \frac{N(N-1)}{2}, 
\end{align}
see appendix \ref{sec:GeneralizedGellMannMatrices}, for a construction of such matrices. We will
put the custodial symmetry-violating bilinears first, to make the conditions for the custodial symmetry simpler to express. 

The very first bilinear $K_0$ will be defined as in \cite{Maniatis:2015gma},
\begin{align}
	K_0=\text{Tr}(\tilde{K}\lambda_0)= \sqrt{\frac{2}{N}}(\Phi_1^\dag \Phi_1+\ldots+\Phi_N^\dag \Phi_N),
\end{align}
with $\lambda_0=\sqrt{2/N}I$.
 Now the general NHDM potential may also be written in the same manner as in \cite{Maniatis:2015gma}
\begin{align}\label{E:potentialAdjoint}
	V=\xi_0 K_0 + \xi_a K_a + \eta_{0} K_0^2 + 2K_0\eta_a K_a+ K_a E_{ab} K_b,
\end{align}
 although the bilinears $K_\alpha$ are organized in a different order here. In \eqref{E:potentialAdjoint}
the parameters $\xi_0, \xi_a, \eta_0, \eta_a$ are real, and $E$ is a real and symmetric $(N^2-1)\times(N^2-1)$ matrix. We will refer to the last term of \eqref{E:potentialAdjoint} as $V_E$,
that is
\begin{align}\label{E:V_Edef}
	V_E= K_a E_{ab} K_b.
\end{align}
Under the basis transformation \eqref{E:HbasisTrafo} the matrix $\tilde{K}$ will transform as
\begin{align}
	\tilde{K} \to \tilde{K}' = U \tilde{K} U^\dag,
\end{align}
while
\begin{align}
	K_0'=K_0, \quad K_a = R_{ab}(U)K_b \quad \text{for}\; a\geq 1,  
\end{align}
since $K_a'= \text{Tr}(U\tilde{K}U^\dag \lambda_a)=\text{Tr}(\tilde{K}U^\dag \lambda_a U)=R_{ab}(U)\text{Tr}(\tilde{K}\lambda_b)=
R_{ab}(U)K_b$,
where the matrix $R_{ab}(U)\in {Ad}_{SU(N)}$ is defined by
\begin{align}\label{E:defR}
	U^\dag \lambda_a U= R_{ab}(U)\lambda_b.
\end{align}
 Here the $\lambda_a$'s may be generalized Gell-Mann matrices, or any basis for the Lie algebra $su(N)$, although we will stick to the generalized Gell-Mann matrices $\lambda_a$ defined in appendix \ref{sec:GeneralizedGellMannMatrices}.
This means the bilinears $K_a$ transform under the adjoint representation of $SU(N)$, while $K_0$
transforms under the trivial representation of $SU(N)$. If the trace $\text{tr}(\lambda_i\lambda_j)\propto \delta_{ij}$, as it will be for generalized Gell-Mann matrices, the matrix $R(U)\in {Ad}_{SU(N)}\subset SO(N^2-1)$ (see appendix \ref{sec:WhenisAd_{SU(N)}subsetSO(N^2-1)}), where the latter inclusion is strict for $N\geq3$.
Eq.~\eqref{E:defR} can be written
\begin{align}\label{E:defR0}
	\lambda_a = R_{ab}(U) U\lambda_b U^\dag,
\end{align}
and by multiplying \eqref{E:defR0} by $(R^{-1})_{ca}$, we obtain
\begin{align}\label{E:defRv20}
	R^{-1}_{ca} \lambda_a = U\lambda_c U^\dag.
\end{align}
Moreover, comparing \eqref{E:defR} and \eqref{E:defRv20} gives us 
$R^{-1}(U)=R(U^\dag)$. If
 $\text{tr}(\lambda_i\lambda_j)\propto \delta_{ij}$, $R$ will be an orthogonal matrix with $R^{-1}=R^T$, and
\eqref{E:defRv20} then yields
\begin{align}\label{E:defRv2}
	R_{ac} \lambda_a = U\lambda_c U^\dag,
\end{align}
which we will apply later.

If we make the substitution $\vec{\Phi} \to U\vec{\Phi}$, and hence
$K_a\to R_{ab}(U)K_b$ in the potential \eqref{E:potentialAdjoint}, the potential remains invariant if we simultaneously substitute the
parameters of the potential with
\begin{align}\label{E:transf2&4-}
	\vec{x}\to  R(U)\vec{x} \quad  \text{and} \quad
	x_0\to  x_0 \quad \text{for} \; x\in\{\xi, \eta\},
\end{align}
 and
\begin{align}
	E\to E' = R(U)ER^T(U).
\end{align}
 
We will now find basis-independent conditions for $SO(4)_C$ symmetry of the different parts of the potential, considered isolated. Later, we will patch these conditions together for sufficient and necessary conditions for custodial symmetry in the general NHDM potential.

\subsubsection{Quadratic terms and quartic terms proportional to $K_0$}
The quadratic terms given by the parameters $\xi_a$ and the quartic terms given by the parameters $\eta_a$, see \eqref{E:potentialAdjoint}, transform under $SU(N)$ Higgs basis transformations in
exactly the same manner, confer \eqref{E:transf2&4-}.
Since the first $N(N-1)/2$ bilinears $K_a$ correspond to custodial symmetry-violating operators of the type $\widehat{C}$, they must be possible to transform away by some Higgs $SU(N)$ basis transformation, when the terms $\xi_aK_a$ and $\eta_a K_a$ are custodially symmetric.
Necessary and sufficient conditions for having $SO(4)_C$ symmetry in these terms simultaneously, is then the existence of a Higgs basis transformation given by $R(U)\in Ad_{SU(N)}$, such that
\ba\label{E:custInK0terms}
   R_{ij}\xi_j = 0 \quad \text{and} \quad  R_{ij}\eta_j = 0 \quad \text{for all} \quad 1 \leq i \leq \frac{N(N-1)}{2}.
\ea

\subsubsection{Quartic terms not involving $K_0$}
If a potential
\ba
   V_E=  K_a E_{ab} K_b
\ea
is custodially symmetric, the real, symmetric matrix $E$ has to be similar to an $E'$ given by
\ba
    E' = R ER^T,
\ea
 where the $N(N-1)/2$ first rows and columns of $E'$ consist of zeros, i.e.
\ba\label{E:EcorrForm}
   E'=\begin{pmatrix} 
 0 & \ldots & 0 &\ldots & 0 \\
\vdots && \vdots & & \vdots \\
0 & \ldots & 0 &\ldots & 0 \\
\vdots & &\vdots & {\bigXdown}& \\
0 & \ldots & 0 & & 
\end{pmatrix},
\ea
\newline
\newline
where $X$ is an arbitrary $(N+2)(N-1)/2\times (N+2)(N-1)/2$ block. The matrix $X$ will be real
and symmetric, since $R$ and $E$ are real, and since $E$ is symmetric.
The first $N(N-1)/2$ rows and columns of $E'$ have to equal zero to make all terms containing custodial symmetry-violating bilinears $\widehat{C}$ disappear. Moreover, this transformation has to be made by some matrix $R\in {Ad}_{SU(N)} \subset SO(N^2-1)$. The first condition that has to be met to make this possible, is that
\ba\label{E:nullityEgeq}
   \text{Nullity}(E)\geq \frac{N(N-1)}{2},
\ea
which means $E$ has an eigenvalue zero with a multiplicity of at least $N(N-1)/2$, or equivalently, the nullspace of $E$ has dimension  $N(N-1)/2$ or more.
We will now prove that the matrix $E$ can be transformed to the form given by \eqref{E:EcorrForm}
only by a matrix $R\in SO(N^2-1)$ which has $N(N-1)/2$ orthonormal nullvectors of $E$ as its first rows.
\begin{proposition}\label{P:NullityOfEAndFormOfR}
Assume the matrix $E$ has at least $k=N(N-1)/2$ linearly independent nullvectors, and let $R\in SO(N^2-1)$ be such that $E'=RER^T$. Then the 
$k$ first rows and columns of $E'$ are zero, as given in \eqref{E:EcorrForm}, if and only if $R$ is of the form
\ba\label{E:Rform}
   R= \left[ n_1, \ldots, n_k, c_{k+1}, \ldots, c_{N^2-1} \right]^T,
\ea
where $n_1, \ldots, n_k$ are nullvectors of $E$ and $n_1, \ldots, n_k, c_{k+1}, \ldots, c_{N^2-1}$ are orthonormal column vectors. 
\end{proposition}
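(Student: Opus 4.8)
The plan is to prove both directions by exploiting the fact that $R$ is orthogonal, so its rows form an orthonormal basis of $\mathbb{R}^{N^2-1}$, and the entries of $E'=RER^T$ are precisely $E'_{ij}=r_i^T E r_j$, where $r_i$ denotes the $i$-th row of $R$ written as a column vector.

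For the "if" direction, suppose $R$ has the form \eqref{E:Rform}, so $r_i=n_i$ for $1\le i\le k$ with $En_i=0$. Then for any $1\le i\le k$ and any $j$ we have $E'_{ij}=n_i^T E r_j = (E n_i)^T r_j = 0$, and symmetrically $E'_{ji}=r_j^T E n_i = 0$. Hence the first $k$ rows and columns of $E'$ vanish, which is exactly the form \eqref{E:EcorrForm}. (That the rows of $R$ being orthonormal is equivalent to the columns being orthonormal is the standard fact $RR^T=I\iff R^TR=I$ for square $R$, so the two phrasings in the statement agree.)

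For the "only if" direction, suppose $E'=RER^T$ has its first $k$ rows and columns equal to zero. Then for $1\le i\le k$ we have $0=E'_{ii}=r_i^T E r_i$. The key observation is that $E$, being real symmetric, has a spectral decomposition $E=\sum_\mu \theta_\mu q_\mu q_\mu^T$; writing $r_i=\sum_\mu c_\mu q_\mu$ gives $r_i^T E r_i=\sum_\mu \theta_\mu c_\mu^2$. This need not force $Er_i=0$ unless $E$ is semidefinite, so the vanishing of the diagonal entry alone is not enough — and this is where I expect the main obstacle to lie. The fix is to use the full strength of the hypothesis: not only the diagonal entries but the entire first $k$ rows vanish, i.e. $r_i^T E r_j=0$ for \emph{all} $j$, $1\le i\le k$. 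Since the rows $r_1,\dots,r_{N^2-1}$ span $\mathbb{R}^{N^2-1}$, the vector $E r_i$ is orthogonal to a spanning set, hence $E r_i = 0$. Thus each of the first $k$ rows $r_i$ of $R$ is a nullvector of $E$, and since the rows of $R$ are orthonormal, $n_1:=r_1,\dots,n_k:=r_k$ together with $c_{k+1}:=r_{k+1},\dots,c_{N^2-1}:=r_{N^2-1}$ are orthonormal, giving exactly the form \eqref{E:Rform}.

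The only remaining point worth a remark is consistency with \eqref{E:nullityEgeq}: the existence of $k$ orthonormal nullvectors of $E$ forces $\mathrm{Nullity}(E)\ge k$, which is why the proposition presupposes that $E$ has at least $k$ linearly independent nullvectors; conversely that hypothesis guarantees such an $R$ can be completed to an orthogonal matrix (extend an orthonormal basis of a $k$-dimensional subspace of $\ker E$ to an orthonormal basis of $\mathbb{R}^{N^2-1}$), so the characterization is non-vacuous. No $Ad_{SU(N)}$ structure is used here; that constraint enters only later when one asks whether such an $R$ is realized by an actual Higgs basis transformation.
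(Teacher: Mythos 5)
Your proof is correct and follows essentially the same route as the paper's: the forward direction uses that the nullvectors annihilate the first $k$ columns of $ER^T$ together with the symmetry of $E'$, and the reverse direction shows $Er_i$ is orthogonal to the spanning set of orthonormal rows of $R$ and hence vanishes. The only difference is cosmetic notation ($E'_{ij}=r_i^TEr_j$ versus the paper's $h_l$ construction), plus your helpful aside on why the diagonal entries alone would not suffice.
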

\begin{proof}
 ($\Leftarrow$): Assume $R$ is of the form given by \eqref{E:Rform}. Since $n_1, \ldots, n_k$ are nullvectors of $E$, the $k$ first columns of the product $(E R^T)$ are zero, and hence also the $k$ first columns of $E' =R E R^T$ are zero. Now $E'$ is symmetric since $E$ is symmetric, and consequently the $k$ first rows of $E'$ are also zero.

($\Rightarrow$): Assume the $k$ first columns and rows of $E'=RER^T$ are zero. 
Let $R^T=\left[c_1, \ldots,c_k, c_{k+1}, \ldots c_{N^2-1}\right]$. Since $R^TR =I$, the column vectors $\{c_j\}$ are orthonormal. 
Write $E=\left[e_1,\ldots,e_{N^2-1}\right]^T$. Let $(h_l)_j=e_j^Tc_l$, where $l\leq k$. Then $h_{l}$ is the $l$'th column in the product $(ER^T)$ for a fixed $l\leq k$. We will now show that $(h_{l})_j=0$ for all $j$ and $l\leq k$, which means that $c_l$ is a nullvector of $E$: $E_{il}'=(c_{i})_j (h_l)_j=0$ by assumption. This means 
that $h_l \bot c_i$ for all $i$. But the set $\{c_i \}_{i=1}^{N^2-1}$ is linearly independent, since $R$ was invertible. This infers that
$\{c_i \}_{i=1}^{N^2-1}$ spans all $\N{R}^{N^2-1}$, and hence $h_l \bot c_i$ for all $i$ cannot be true unless $h_l=0$. Hence the components $(h_l)_j=0$ for all $j$, and then $c_l$ is a nullvector of $E$ for all $l\leq k$.
\end{proof}
The choice, or permutations, of nullvectors $n_1, \ldots, n_k$, does not affect $E'$, since the vanishing elements of $E'$
are the only ones to involve these nullvectors: $E'=RER^T$ infers 
\ba
E_{ij}'= R_{i\alpha } E_{\alpha \beta} R_{ j\beta}= (c_i)_\alpha E_{\alpha \beta} (c_j)_\beta=0
\ea
 when $i$ or $ j \leq k =N(N-1)/2$, and where $\{c_l \}_{l=1}^{N^2-1}=\{n_1, \ldots, n_k, c_{k+1}, \ldots, c_{N^2-1}\} $. 

We will now derive sufficient and necessary conditions for when a matrix $R$, for instance of the form \eqref{E:Rform}, is a member of the adjoint representation of $SU(N)$,
that is $R\in Ad_{SU(N)} \subset SO(N^2-1)$.
To obtain this, we will need some results about Lie algebras. 
Let $g$ be a Lie algebra. By a Lie algebra automorphism $r$ we will mean a $\N{R}$-linear 
bijection on the vector space $g$, such that the Lie bracket is preserved. That is, 
$r$ is a injection (1-1) from $g$ onto $g$, and if $X=x_i \lambda_i \in g$,
then $r(X)=x_i r(\lambda_i)$, and
\begin{align}
	r([X,Y])=[r(X),r(Y)] 
\end{align}
for all $X,Y\in g$. When $g=su(N)$, automorphisms are either similarity transformations (inner automorphisms) or combinations of similarity transformations and complex conjugation (outer automorphisms):
\begin{proposition}\label{T:auto}  
An automorphism $r: su(N) \to su(N)$ for $N>2$ is either an inner automorphism, i.e.~a similarity transformation
\[
   r(X)= U X U^\dag,
\]
for an $U\in SU(N)$, or a combination of complex conjugation and a similarity transformation,
\ba\label{E:outerForm}
   r(X)=U X^\ast U^\dag.
\ea
\end{proposition}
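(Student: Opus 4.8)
The plan is to reduce the statement to the classification of automorphisms of the complexification $sl(N,\mathbb{C})$, which is a standard result about Dynkin diagram symmetries. The key observation is that a real automorphism $r$ of $su(N)$ extends uniquely by $\mathbb{C}$-linearity to an automorphism $r_{\mathbb{C}}$ of the complex Lie algebra $sl(N,\mathbb{C}) = su(N)\oplus i\,su(N)$; conversely $r$ is recovered as the restriction of $r_{\mathbb{C}}$ to the real form $su(N)$. So it suffices to describe those automorphisms of $sl(N,\mathbb{C})$ that preserve the real form $su(N)$, and then read off what they do on $su(N)$.

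First I would recall that $\mathrm{Aut}(sl(N,\mathbb{C}))$ has the inner automorphisms $\mathrm{Inn}(sl(N,\mathbb{C})) = PSL(N,\mathbb{C})$ as a normal subgroup, and that for $N>2$ the quotient $\mathrm{Out}(sl(N,\mathbb{C}))$ is $\mathbb{Z}/2$, generated by the diagram automorphism of $A_{N-1}$, which can be realized concretely as $X\mapsto -X^{T}$ (equivalently $X\mapsto -J X^{T} J^{-1}$ for a suitable $J$, up to an inner automorphism). Thus every automorphism of $sl(N,\mathbb{C})$ is of the form $X\mapsto gXg^{-1}$ or $X\mapsto -gX^{T}g^{-1}$ for some $g\in SL(N,\mathbb{C})$. (For $N=2$ the outer part is absent, which is why the hypothesis $N>2$ appears; note that for $N=2$ the map $X\mapsto -X^{T}$ is already inner, so no information is lost, but the statement is phrased for the generic case.)

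Next I would impose that $r_{\mathbb{C}}$ preserves the compact real form $su(N)$, i.e.\ that $r$ actually maps $su(N)$ to $su(N)$. In the inner case $X\mapsto gXg^{-1}$: requiring $g\,su(N)\,g^{-1}=su(N)$ means $g$ normalizes $su(N)$, and a standard argument (polar decomposition $g = uh$ with $u$ unitary, $h$ positive Hermitian; then $h$ must also normalize $su(N)$, forcing $h$ scalar) shows $g$ is a scalar multiple of a unitary matrix, so after rescaling $g=U\in SU(N)$ and $r(X)=UXU^{\dagger}$. In the outer case $X\mapsto -gX^{T}g^{-1}$: since for $X\in su(N)$ we have $X^{\dagger}=-X$, hence $-X^{T}=\overline{X}=X^{\ast}$, the map is $X\mapsto gX^{\ast}g^{-1}$, and the same normalizer argument forces $g$ to be (a scalar times) a unitary $U\in SU(N)$, giving $r(X)=UX^{\ast}U^{\dagger}$, which is \eqref{E:outerForm}.

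The main obstacle — and the only real content — is justifying that $\mathrm{Out}(su(N))$ is exactly $\mathbb{Z}/2$ for $N>2$ and nothing more exotic, i.e.\ that there are no further automorphism classes; this is where one genuinely uses the structure theory (maximal tori / Cartan subalgebras are conjugate, the Weyl group acts transitively on Weyl chambers, so any automorphism can be normalized to one preserving a chosen Cartan subalgebra and a chosen set of simple roots, whence it is determined by a symmetry of the Dynkin diagram $A_{N-1}$, whose automorphism group is trivial for $N=2$ and $\mathbb{Z}/2$ for $N\geq 3$). I would cite this (e.g.\ from a standard Lie theory text) rather than reprove it, and then the argument above is essentially bookkeeping: extend to the complexification, invoke the diagram-automorphism classification, restrict back, and use the normalizer-of-$su(N)$-in-$GL(N,\mathbb{C})$ computation to pin $g$ down to $SU(N)$ in each of the two cases.
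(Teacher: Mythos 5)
Your proof is correct, and it takes a genuinely different (more self-contained) route than the paper's. The paper works directly with the real form: it invokes the results of \cite{gundogan} and \cite{chuah} to identify $\mathrm{Out}(su(N))$ with $\mathrm{Out}(sl(N,\mathbb{C}))\cong\mathrm{Aut}(A_{N-1})\cong\mathbb{Z}_2$ for $N>2$, takes as given that the nontrivial class is represented by complex conjugation, and then only needs the one-line observation that $(UXU^\dagger)^\ast=U^\ast X^\ast (U^\ast)^\dagger$ to put every outer automorphism in the form $r(X)=UX^\ast U^\dagger$. You instead complexify $r$ to an automorphism of $sl(N,\mathbb{C})$, invoke the standard classification there ($X\mapsto gXg^{-1}$ or $X\mapsto -gX^{T}g^{-1}$ with $g\in SL(N,\mathbb{C})$), and restrict back to the compact real form; the extra step your route requires is the normalizer computation pinning $g$ down to a scalar multiple of a unitary, and you carry it out correctly (conjugation by the positive factor $h$ of the polar decomposition preserving anti-Hermiticity forces $h^{2}$ to be central, hence $h$ scalar), as is the identity $-X^{T}=X^{\ast}$ on anti-Hermitian $X$ that converts the diagram automorphism into complex conjugation. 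Both arguments ultimately rest on the same structural fact --- that $\mathrm{Aut}(A_{N-1})$ contains nothing beyond the order-two flip --- and both cite it rather than reprove it. What your version buys is an explicit derivation that the automorphism restricted to $su(N)$ really has the stated unitary form, rather than an appeal to the classification of outer automorphisms of the real form itself; the price is the additional polar-decomposition bookkeeping.
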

\begin{proof}
Similarity transformations $r(X)= U X U^\dag$ are Lie algebra automorphisms since they are the derivatives of Lie group automorphisms $\psi(V)=UVU^\dag$ for $U,V\in SU(N)$, see e.g.~\cite{Baker}. These are the inner automorphisms. All non-inner automorphisms are called outer automorphisms.
For $su(N)$ with $N>2$, the outer automorphisms consist of complex conjugation, in combination with an inner automorphism: 

The outer isomorphism group $\text{Out}(\mathfrak{g})$
of the a real, simple Lie algebra $\mathfrak{g}$, is always given by $\text{Out}(\mathfrak{g})=\text{Aut}(\mathfrak{g})/\text{Inn}(\mathfrak{g})$, just as is the case for complex, simple Lie algebras \cite{gundogan}. The real, simple Lie algebra $su(N)$ has an outer automorphism group
 $\text{Out}(su(N))$ isomorphic to the outer automorphism group of the complexification of real
$su(N)$, that is $su(N,\N{C})=sl(N,\N{C})$. The outer automorphism  group $\text{Out}(su(N))$ is hence the automorphism group of the Dynkin diagram $A_{N-1}$, which is trivial for $N=2$, and isomorphic to $\N{Z}_2$ for $N>2$, where the non-trivial element of $\text{Out}(su(N))$ in the latter case corresponds to complex conjugation. This is a well known result, but see e.g.~\cite{chuah} with the compact $su(N)=\mathfrak{g}$  as a real form of the complex, simple Lie algebra $sl(N,\N{C})$, where a Cartan involution $\theta$ of $su(N)$ only generate the trivial group, and hence $\text{Out}(su(N))\cong \text{Aut}(A_{N-1})\cong \text{Out}(sl(N,\N{C}))$. 


Finally, since $(UXU^\dag)^\ast = U^\ast X^\ast U^{\ast \dag}$, and $U^\ast \in SU(N)$ when $U\in SU(N)$,
an outer automorphism can always be written on the form \eqref{E:outerForm}.    
\end{proof}
See appendix \ref{sec:CCisAnOuterAutomorphismOfsu(N)} for an explanation why complex conjugation is an
outer automorphism of $su(N)$ for $N>2$, while it is an inner automorphism of $su(2)$.

We can now show the following characterization of $Ad_{SU(N)}$ for $N>2$: A matrix $R$ is an element of $Ad_{SU(N)}$ if and only if the linear mapping $r$ on $su(N)$ associated with $R$ preserves
the commutator for all elements of $su(N)$, and does not involve complex conjugation. 
\begin{proposition}\label{T:AdInnA} 
  Let $r: su(N) \to su(N)$ be a mapping on the Lie algebra $su(N)$, with $N>2$. Moreover, let 
	$\{v_i\}_{i=1}^{N^2-1}$ be a basis for $su(N)$ with $\text{tr}(v_i v_j)\propto \delta_{ij}$ such that $Ad_{SU(N)}\subset SO(N^2-1)$, and	let $R$ be a real $(N^2-1)\times (N^2-1)$ matrix such that for $X=x_i v_i \in su(N)$, we have 
	\[
	r(X)=(Rx)_i v_i =R_{ij}x_j v_i. 
	\]
	Then $R\in Ad_{SU(N)}$ if and only if the mapping $r$ is an inner Lie algebra automorphism, that is, $r$ is a $\N{R}$-linear bijection and
\[
	r([X,Y])=[r(X),r(Y)] 
\]	
for all $X,Y\in su(N)$, while for all $U\in SU(N)$, $r(X)\ne U X^\ast U^\dag$ for some $X\in su(N)$.
\end{proposition}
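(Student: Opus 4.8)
The plan is to prove the biconditional in two directions, using Proposition~\ref{T:auto} to pin down the structure of automorphisms of $su(N)$ and the hypothesis $\text{tr}(v_iv_j)\propto\delta_{ij}$ to guarantee that $Ad_{SU(N)}\subset SO(N^2-1)$, so that all the maps in sight are genuinely orthogonal.

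First I would do the easy direction ($\Rightarrow$). Suppose $R\in Ad_{SU(N)}$, so $R=R(U)$ for some $U\in SU(N)$, meaning $U^\dag v_a U = R_{ab}v_b$. Then the associated linear map is exactly $r(X)=U^\dag X U$ (or $UXU^\dag$, depending on the convention fixed in \eqref{E:defR}; I would simply match whichever one was used), which is an inner Lie algebra automorphism because conjugation by a fixed group element preserves brackets and is invertible. To finish this direction I must argue that such an $r$ is genuinely inner and not of the outer form $X\mapsto UX^\ast U^\dag$: this is where $N>2$ enters, since by Proposition~\ref{T:auto} (together with the appendix remark that complex conjugation is an outer automorphism for $N>2$) the inner and outer automorphisms are disjoint as maps on $su(N)$. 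One clean way to see the disjointness concretely: an inner automorphism has determinant $+1$ on the real vector space $su(N)$ and preserves the orientation / the sign of the structure-constant tensor, whereas an outer one reverses it; alternatively, inner automorphisms lie in the identity component of $\text{Aut}(su(N))$ while outer ones do not. So $R\in Ad_{SU(N)}$ forces $r$ inner and $r(X)\neq UX^\ast U^\dag$ for any $U$, as claimed.

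Next the substantive direction ($\Leftarrow$). Assume $r$ is an $\N{R}$-linear bijection preserving the Lie bracket, i.e.\ a Lie algebra automorphism, and that $r$ is not of the outer form. By Proposition~\ref{T:auto}, since $N>2$, every automorphism of $su(N)$ is either inner, $r(X)=UXU^\dag$, or outer, $r(X)=UX^\ast U^\dag$; having excluded the outer possibility, $r$ must be inner, say $r(X)=UXU^\dag$ for some $U\in SU(N)$ (again adjusting $U\leftrightarrow U^\dag$ to match \eqref{E:defR}). But then, writing $X=x_jv_j$, we get $r(X)=U v_j U^\dag\, x_j = R(U)_{ij} v_i\, x_j$ by the definition \eqref{E:defRv2} of $R(U)$, so the matrix implementing $r$ in the basis $\{v_i\}$ is precisely $R(U)\in Ad_{SU(N)}$. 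Since the matrix representing a given linear map in a fixed basis is unique, $R=R(U)\in Ad_{SU(N)}$, which is what we wanted.

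The main obstacle is the ``does not involve complex conjugation'' clause: I have to make sure that ruling out the one specific parametrized family $\{X\mapsto UX^\ast U^\dag : U\in SU(N)\}$ really does exclude \emph{all} outer automorphisms, not just those with a particular $U$. This is exactly the content of the last sentence of the proof of Proposition~\ref{T:auto} — every outer automorphism, being complex conjugation followed by some inner automorphism, can be rewritten in the form $UX^\ast U^\dag$ — so I would invoke that and emphasize that $\text{Out}(su(N))\cong\N{Z}_2$ for $N>2$ means there is, up to inner automorphisms, a single outer coset, represented by complex conjugation. With that, the hypothesis cleanly selects the inner component, and the rest is the bookkeeping above. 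I would also note in passing that the case $N=2$ is genuinely excluded: there complex conjugation is inner (it equals conjugation by $i\sigma_2$), so the outer-exclusion clause would be vacuous and the statement would need no such hypothesis — consistent with $Ad_{SU(2)}=SO(3)$ being the full orthogonal connected group.
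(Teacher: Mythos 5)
Your argument follows the paper's proof in essentially the same way: both directions reduce to translating between the matrix $R$ and the map $r$ via \eqref{E:defRv2}, and the exclusion of the outer form rests on Proposition \ref{T:auto} together with the fact (appendix \ref{sec:CCisAnOuterAutomorphismOfsu(N)}) that complex conjugation is genuinely outer for $N>2$. As a proof it is correct.

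One warning about your parenthetical ``concrete'' justifications of the inner/outer disjointness: the claim that an outer automorphism reverses the sign of the structure-constant tensor is false. Complex conjugation preserves the Lie bracket and therefore satisfies \eqref{E:condComm2} exactly as inner automorphisms do --- indeed the paper's Theorem \ref{T:tm} turns on precisely this point, since the solutions of \eqref{E:condComm3} come in pairs $R_1$ and $R^{\text{cc}}R_1$, and the structure-constant equations cannot tell them apart. The determinant criterion also fails in general: in the basis $\{i\lambda_j\}$ complex conjugation acts as $-1$ on $(N-1)(N+2)/2$ basis vectors, and this number is even for e.g.\ $N=5$, so $\det R^{\text{cc}}=+1$ there. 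Of the three justifications you offer, only the last --- that inner automorphisms constitute the identity component of $\text{Aut}(su(N))$ while the outer coset is a separate component --- is valid, and that is in substance what the appeal to Proposition \ref{T:auto} already provides. Since your main line of argument relies on Proposition \ref{T:auto} and not on these asides, the proof stands, but the asides should be deleted or corrected.
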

\begin{proof}
($\Rightarrow$): Assume $R\in Ad_{SU(N)}$. This means there is an $U\in SU(N)$ such that $R_{ij}v_i=Uv_j U^\dag$ by \eqref{E:defRv2}.
Then $r(Z)=r(z_i v_i)= R_{ij} z_j v_i=z_j U v_j U^\dag= U Z U^\dag$ for any $Z\in su(N)$, and hence $r$ is an inner automorphism. It respects the commutator in the following manner: 
$[r(X),r(Y)]= [ U X U^\dag, U Y U^\dag]=U (X Y  -  Y X) U^\dag = r([X,Y])$. By proposition \ref{T:auto} an inner automorphism means an automorphism that does not involve complex conjugation.

($\Leftarrow$): Assume $r$ is an inner Lie algebra automorphism on $su(N)$. Then, by definition of an inner automorphism,
$r(X)= U X U^\dag$,
for an $U\in SU(N)$.

 Then $U X U^\dag= r(X)=r(x_i v_i)=x_i r( v_i)=x_j R_{ij} v_i$, and $R\in Ad_{SU(N)}$ by \eqref{E:defRv2}. 
\end{proof}
We will now find conditions on the matrix $R$ equivalent with the associated linear mapping $r$
preserving the commutator.
Let $X, Y\in su(N)$, with $X=x_i i\lambda_i, Y=y_j i\lambda_j$, where $\{v_i\}_{i=1}^{N^2-1} =\{i\lambda_i\}_{i=1}^{N^2-1}$ is a basis for the Lie algebra $su(N)$ with $\text{tr}(v_i v_j)\propto \delta_{ij}$, and where the matrices $\lambda_i$ are satisfying
\ba
   \left[ \lambda_i, \lambda_j\right]=2if^{ijk}\lambda_k.
\ea
The constants $f^{ijk}$ are denoted structure constants. The matrices $\lambda_i$ may be generalized Gell-Mann matrices, or any other matrices such that $\{i\lambda_i\}_{i=1}^{N^2-1}$ is a basis for $su(N)$.
Then $[X,Y]=[x_i i\lambda_i,y_j i\lambda_j]=-x_iy_j[\lambda_i,\lambda_j]=
-x_iy_j(2if^{ijk}\lambda_k)$, hence
\ba\label{E:LScomm}
   r(\left[X,Y\right])=-2 x_iy_jf^{ijk}r(i\lambda_k)=-2x_iy_jf^{ijk}R_{ek}i\lambda_e.
\ea
Note that $r$ is a $\N{R}$-linear function on $su(N)$, with linear combinations over $\N{R}$ of the basis vectors $\{i\lambda_i\}_{i=1}^{N^2-1}$ as domain, and not linear combinations over $\N{R}$ of the matrices $\{\lambda_i\}_{i=1}^{N^2-1}$ as domain.
  
On the other hand, $[r(X),r(Y)]=[x_iR_{ai}i\lambda_a,y_jR_{cj}i\lambda_c]=  x_iR_{ai}y_jR_{cj}[i\lambda_a,i\lambda_c]= x_iR_{ai}y_jR_{cj}(-2if^{ace}\lambda_e)$, and hence
\ba\label{E:RScomm}
    \left[r(X),r(Y)\right]= -2i x_iy_jR_{ai}R_{cj}f^{ace}\lambda_e.
\ea
Equating \eqref{E:LScomm} and \eqref{E:RScomm}, gives us
\ba
  x_iy_jR_{ek}f^{ijk}= x_iy_jR_{ai}R_{cj}f^{ace},
\ea
and since $x_i, y_j$ are arbitrary, we get
\ba\label{E:condComm}
   R_{ek} f^{ijk}= R_{ai}R_{cj}f^{ace}.
\ea
The equations \eqref{E:condComm} being satisfied is equivalent with the mapping $r$ respecting the commutator. If we furthermore assume that $R$ is a bijection, 
it will be invertible, and the inverse will be the transposed matrix $R^T$, since by proposition \ref{T:AdInnA} either $R\in Ad_{SU(N)} \subset SO(N^2-1)$ 
or $R$ is a product of complex conjugation and matrices in $Ad_{SU(N)}$. The latter means, in case the $\lambda_i$'s are generalized Gell-Mann matrices, that $R\in O(N^2-1)$.\footnote{If the basis matrices $\lambda_i$ are either real or purely imaginary, then complex conjugation $r(X)=x_i r(\lambda_i) = x_i \lambda_i^\ast$, which makes complex conjugation $R$ a diagonal matrix with $R_{ii}=\pm 1$ (no sum over $i$), negative if $\lambda_i$ is imaginary. Hence $R\in O(N^2-1)$.}  
In fact, either $R$ will be bijective or it will be zero:
\begin{proposition}\label{t:SimpleLie}
 A Lie algebra homomorphism $r:su(N)\to su(N)$ is either an automorphism (and hence bijective), or $r=0$. 
\end{proposition}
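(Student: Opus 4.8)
The plan is to exploit the fact that $su(N)$ is a simple Lie algebra for $N\geq 2$ (the case $N=1$ is vacuous, since $su(1)=0$), so that its only ideals are $\{0\}$ and $su(N)$ itself. First I would note that for any Lie algebra homomorphism $r:su(N)\to su(N)$ the kernel $\ker r$ is an ideal of $su(N)$: if $r(X)=0$ then for every $Y\in su(N)$ we get $r([X,Y])=[r(X),r(Y)]=[0,r(Y)]=0$, so $[X,Y]\in\ker r$. By simplicity, either $\ker r=su(N)$, in which case $r=0$, or $\ker r=\{0\}$, in which case $r$ is injective. In the latter case, since $su(N)$ is finite-dimensional with $\dim_{\N{R}}su(N)=N^2-1$ and $r$ maps $su(N)$ into itself, the rank--nullity theorem forces $r$ to be surjective as well, hence bijective; a bijective Lie algebra homomorphism is, by definition, a Lie algebra automorphism. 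This establishes the stated dichotomy.

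The only ingredient requiring justification is the simplicity of $su(N)$. I would recall that the complexification of $su(N)$ is $sl(N,\N{C})$, a simple complex Lie algebra, and that a real Lie algebra whose complexification is simple is itself simple: were $\mathfrak{a}$ a nonzero proper ideal of $su(N)$, then $\mathfrak{a}\oplus i\mathfrak{a}$ would be a nonzero ideal of $sl(N,\N{C})$, and it would be proper by a dimension count, namely $\dim_{\N{C}}(\mathfrak{a}\oplus i\mathfrak{a})=\dim_{\N{R}}\mathfrak{a}<\dim_{\N{R}}su(N)=\dim_{\N{C}}sl(N,\N{C})$, contradicting the simplicity of $sl(N,\N{C})$. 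Alternatively one may simply cite this as a standard fact, as it was already used implicitly in the discussion of outer automorphisms above.

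There is no genuine obstacle here; the argument is essentially formal. The only point demanding a little care is that the notion of automorphism used in this paper includes surjectivity, so one must remember to invoke finite-dimensionality to promote injectivity to bijectivity in the trivial-kernel case. I would therefore present the proof in two short steps: (i) $\ker r$ is an ideal of $su(N)$, hence equal to $\{0\}$ or to $su(N)$ by simplicity; (ii) in the trivial-kernel case, finite dimension upgrades injectivity to bijectivity, so $r$ is an automorphism, while in the other case $r=0$.
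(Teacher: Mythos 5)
Your proof is correct and follows essentially the same route as the paper: the kernel of $r$ is an ideal of the simple Lie algebra $su(N)$, so it is either all of $su(N)$ (giving $r=0$) or trivial. You are in fact slightly more careful than the paper's own proof, which leaves implicit both the rank--nullity step promoting injectivity to bijectivity and the justification that $su(N)$ is simple.
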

\begin{proof}
  The kernel of a Lie algebra homomorphism $\phi: \mathfrak{g}\to \mathfrak{g'} $ is an ideal of the Lie algebra $\mathfrak{g}$. Furthermore, every Lie algebra ideal 
  $\mathfrak{i}$ corresponds precisely to a homomorphism with kernel $\mathfrak{i}$. A Lie algebra is called simple if every proper ideal $\mathfrak{i}\lhd \mathfrak{g}$ is trivial, that is $\mathfrak{i}=0$. 
  Now $su(N)$ is a simple Lie algebra, which means that only trivial, proper ideals exist. Hereby a homomorphism which is not an automorphism equals zero.
\end{proof}
If we now assume $R\ne 0$, and multiply \eqref{E:condComm} by the inverses of the matrices $R_{ai}$ and $R_{cj}$, that is $R^T_{id}=R_{di}$ and $R^T_{jg}=R_{gj}$, we get\footnote{An alternative characterization of $Ad_{SU(N)}$ is that $Ad_{SU(N)}$ is the matrices $R$ which leave the trace $\text{Tr}(XYZ-XZY)$ invariant for arbitrary $X,Y,Z\in su(N)$. Here $\{X,Y,Z\}$ are simultaneously transformed as $W=w_i\lambda_i\to (Rw)_i\lambda_i= R_{ij}w_j\lambda_i $, for all $W\in\{X,Y,Z\}$.  Applying this characterization, and $\text{Tr}(\lambda_a \lambda_b)\propto \delta_{ab}$, leads to \eqref{E:condComm2} as well.}
\ba\label{E:condComm2}
   R_{di}R_{gj}R_{ek} f^{ijk}= f^{dge}.
\ea

Unfortunately, whether the matrix $R$ of \eqref{E:Rform} is an element of $Ad_{SU(N)}$ or not, will depend on the choice of orthonormal nullvectors $n_1,\ldots, n_k$, where $k=N(N-1)/2$. Let 
\ba
l=\text{Nullity}(E)\geq k, 
\ea
and let $S=\{\tilde{n}_i\}_{i=1}^l$ be an orthonormal set of nullvectors of $E$. Then $S$ spans the nullspace of $E$, and any choice of orthonormal nullvectors 
$\{{n}_i\}_{i=1}^k$ of $R$, can be written as a rotation of the vectors of $S$. This means that any linear combination of the nullvectors of $S$ can be written as
\ba\label{E:Oltrafo}
   \left[n_1,n_2,\ldots,n_k,n_{k+1},\ldots, n_l \right]^T= O \left[\tilde{n}_1,\tilde{n}_2,\ldots, \tilde{n}_l \right]^T,
\ea
where the matrices of nullvectors are regarded as $1\times l$ matrices with the nullvectors as elements,
so that the transpose does not act on the nullvectors.
The superfluous vectors $n_{k+1},\ldots, n_l$ will not be applied in the construction of $R$, and the $l\times l$ matrix $O$ has to be
an element of the orthogonal group $O(l)$ to ensure the vectors $n_{1},\ldots, n_l$ are orthonormal: $n_i\cdot n_j = \delta_{ij}$ means that
$O_{ip}\tilde{n}_p \cdot O_{jq}\tilde{n}_q = O_{ip} O_{jq}\delta_{pq} =   \delta_{ij}$,
  which infer $O_{ip}O_{jp}= O_{ip}O_{pj}^T =\delta_{ij}$, that is, $OO^T=I$, and we conclude that
\ba
   O\in O(l).
\ea
 
 Matrices $O$ that change the orders of two nullvectors are examples of elements in the orthogonal group $O(l)$. To see that the order of the nullvectors matters, take for instance $SU(3)$ with the matrices given in appendix \ref{sec:GeneralizedGellMannMatrices} as $su(3)$-basis. Let $R$ be defined by \eqref{E:Rform}, that is,
\ba\label{E:Rgjentatt}
R=[n_1,n_2, n_3, \ldots,n_k,c_{k+1}, \ldots, c_{N^2-1}]^T, 
\ea
and let 
\ba\label{E:R'}
R'=[n_2,n_1,n_3,\ldots,n_k,c_{k+1}', \ldots, c_{N^2-1}']^T, 
\ea
i.e.~with identical nullvectors as $R$ except for the two first rows being interchanged. Assume $R$ satisfies \eqref{E:condComm2} for some choice of $c_{k+1},\ldots,c_{N^2-1}$. We will show that $R'$ can not satisfy
\eqref{E:condComm2} for any choice of $c_{k+1}',\ldots,c_{N^2-1}'$, and hence the order of the nullvectors matters when we want to test for the custodial symmetry:
All the equations given by different values for $d,g$ and $e$ in \eqref{E:condComm2}
have to be satisfied simultaneously, if the quartic terms shall be custodially symmetric.
Since $R$ satisfies \eqref{E:condComm2} for all choices of $d,g$ and $e$, we must for instance have\footnote{The structure constant $f^{123}=1/2$ with our alternatively ordered Gell-Mann matrices, defined in appendix \ref{sec:GeneralizedGellMannMatrices}, while $f^{123}=1$ with the standard Gell-Mann matrices.}
\begin{align}\label{E:contrR}
	\frac{1}{2} = f^{123}= R_{1i}R_{2j}R_{3k} f^{ijk}.
\end{align}
On the other hand, if $R'$ shall be an element of $Ad_{SU(3)}$, we must for instance have
\begin{align}\label{E:contrR'}
	-\frac{1}{2}= f^{213}= R_{2i}'R_{1j}'R_{3k}' f^{ijk},
\end{align}
where $R_{2i}'$ is the second row of $R'$, that is $R_{2i}'=(n_1)_i$ by \eqref{E:R'},
while $R_{1j}'=(n_2)_j$ and $R_{3k}'=(n_3)_k$ ($N(N-1)/2=3$ when $N=3$, and hence we must have at least 3 nullvectors in case of a custodial symmetry).
Then \eqref{E:contrR'} yields
\begin{align}\label{E:contrad}
	-\frac{1}{2} = R_{2i}'R_{1j}'R_{3k}' f^{ijk} = R_{1i}R_{2j}R_{3k} f^{ijk},
\end{align}
which contradicts \eqref{E:contrR}, and $R$ and $R'$ cannot simultaneously be elements of $Ad_{SU(N)}$.
Moreover, if the nullity of $E$ is greater than $3$, and if $n_1$ in \eqref{E:Rgjentatt} is interchanged with an entirely new nullvector $\bar{n}_1$, eq.~\eqref{E:contrR} with $R_{1i}=(\bar{n}_1)_i$ may not hold anymore. And if \eqref{E:contrR} still holds, it will not hold anymore if $\bar{n}_1$ is interchanged with $n_2$, as in the previous example.

\subsubsection{Necessary conditions for the custodial $SO(4)$ symmetry}
\label{sec:Necessary_conditions_for_the_custodial_symmetry}
The equations \eqref{E:condComm2} give us a simple test, that is, a necessary 
condition, for the $SO(4)_C$ symmetry. Choosing $d,g,e\leq N(N-1)/2$, reduces all elements from the matrix $R$ in \eqref{E:Rform} to elements of the chosen nullvectors,
\ba\label{E:condComm2Test}
   f^{dge}= (n_{d})_i (n_{g})_j (n_{e})_k f^{ijk}.
\ea
If there for any choice of the nullvectors in
\eqref{E:condComm2Test} is a choice of $d,g,e$ (dependent on the choice of nullvectors), such that \eqref{E:condComm2Test} does not hold, then the quartic terms $V_E$ (and the potential) are not $SO(4)_C$-symmetric. 
The advantage with \eqref{E:condComm2Test}, is that it only refers to the nullvectors of $E$ and the structure constants of $su(N)$, which are simple to calculate.

Consider a specific set of $l=\text{Nullity}(E)$ orthonormal nullvectors
$S=\{\tilde{n}_i\}_{i=1}^{i=l}$, they can be rotated into any orthonormal set of nullvectors $n_1,n_2,\ldots,n_k,n_{k+1},\ldots, n_l$ of $E$, as given by \eqref{E:Oltrafo}.
Then the necessary condition \eqref{E:condComm2Test} for the custodial symmetry expressed with specific, orthonormal nullvectors
$\{\tilde{n}_i\}_{i=1}^{i=l}$ becomes
\ba\label{E:condComm2TestO}
     f^{dge}= (O_{dp}\tilde{n}_p)_i (O_{gq}\tilde{n}_q)_j (O_{er}\tilde{n}_r)_k f^{ijk}. 
\ea
If there for all choices of $O\in O(l)$ exist
choices of indices $d,g,e\leq k=N(N-1)/2$ such that
\eqref{E:condComm2TestO} does not hold, then
the quartic terms $V_E$ (and hence the whole potential) are not $SO(4)_C$-symmetric.

For the 3HDM \eqref{E:condComm2TestO} can be substantially simplified, in the case of three nullvectors, i.e.~$l=3$.  
For the 3HDM all equations \eqref{E:condComm2TestO} will hold automatically, perhaps except for the case where $(dge)=(123)$: If
 two of the indices $d,g,e$ are equal, the left hand side of \eqref{E:condComm2TestO} is zero, and the right hand side is also zero,
since the expression will be odd in two of the indices. For instance, if $d=g=1$, then the two first factors of the right hand side,
$(O_{1p}\tilde{n}_p)_i$ and $(O_{1q}\tilde{n}_q)_j$, are the same, and since $f^{ijk}$ is antisymmetric in $i$ and $j$, the sum over $i$ and $j$ will be zero for each $k$. Now write $f^{ijk}=\epsilon^{ijk} f(i,j,k)$, where $\epsilon^{ijk}$ is the completely antisymmetric Levi-Civita symbol. Then
\ba\label{E:condComm2TestO_Simplified3HDM}
    f^{123} &&= \epsilon^{ijk} O_{1p}(\tilde{n}_p)_i O_{2q}(\tilde{n}_q)_j O_{3r}(\tilde{n}_r)_k f(i,j,k) \nn \\
		  &&= \epsilon^{ijk} \epsilon^{pqr} O_{1p}O_{2q}O_{3r}(\tilde{n}_1)_i (\tilde{n}_2)_j (\tilde{n}_3)_k f(i,j,k) \nn \\
			&&= \text{det}(O) (\tilde{n}_1)_i (\tilde{n}_2)_j(\tilde{n}_3)_k f^{ijk},
\ea
since $\epsilon^{ijk} A_{ip}A_{jq}A_{kr}=\epsilon^{ijk} \epsilon^{pqr} A_{i1}A_{j2}A_{k3}$ for any square matrix $A$, with $A_{ab}=(\tilde{n}_b)_a$ in our case. We have also used that $\text{det}(O)=\epsilon^{pqr} O_{1p}O_{2q}O_{3r}$. Hence, for the 3HDM in the case $E$ has exactly $3$ nullvectors, the necessary condition \eqref{E:condComm2TestO} holds for all $SO(3)$-rotated nullvectors, if and only if it holds for the initial nullvectors $\tilde{n}_1$, $\tilde{n}_2$ and $\tilde{n}_3$.  
 We will apply \eqref{E:condComm2TestO_Simplified3HDM} to show that certain potentials are not custodially symmetric in section \ref{sec:ExamplesApplyingANecessaryCondition}.

\subsubsection{Necessary and sufficient conditions for the custodial $SO(4)$ symmetry}
\label{sec:Necessary_and_sufficient_conditions}
We will now summarize our results from the previous sections as a theorem, giving necessary and sufficient, basis-independent conditions for having the custodial $SO(4)$ symmetry in a NHDM.
Let $V(\Phi_1,\ldots,\Phi_N)$ be a NHDM potential, given by \eqref{E:potentialAdjoint}. If there exists a Higgs basis transformation $\vec{\Phi}\to \vec{\Phi}'=U\vec{\Phi}$, such that $V'(\Phi_1',\ldots,\Phi_N')=V(\Phi_1,\ldots,\Phi_N)$, and $V'(\Phi_1',\ldots,\Phi_N')$ is manifestly $SO(4)_C$-symmetric, we called the original potential $V(\Phi_1,\ldots,\Phi_N)$ $SO(4)_C$-symmetric, cf.~the discussion below \eqref{E:HbasisTrafo}. We can then show the following:
\begin{theorem}\label{T:tm}
 Let $V$ be a NHDM potential, given by \eqref{E:potentialAdjoint}, with $N\geq 3$. Then the potential $V$ is $SO(4)_C$-symmetric if and only if the following three conditions are satisfied simultaneously:
\newline
i)
 The nullity 
 $l$ of the matrix $E$ of \eqref{E:potentialAdjoint} is equal to or greater than
$k=N(N-1)/2$.
\newline
ii) There exists a real $(N^2-1)\times (N^2-1)$ matrix $R$ whose $N(N-1)/2$ first rows are an orthonormal set of nullvectors of $E$, such that 
\ba\label{E:condComm3}
    f^{abc}=R_{ai}R_{bj}R_{ck}f^{ijk},
\ea
is satisfied for all $a,b$ and $c$. The constants $f^{ijk}$ here are the structure constants associated with the alternatively ordered, generalized Gell-Mann matrices $\{ \lambda_j\}_{j=1}^{N^2-1}$ of appendix \ref{sec:GeneralizedGellMannMatrices}. 
\newline
iii)  The matrix $R$ of condition ii) also satisfies
\ba\label{E:custInK0terms2}
    R_{ij}\xi_j = 0 \quad \text{and} \quad  R_{ij}\eta_j = 0 \quad \text{for all} \quad 1 \leq i \leq \frac{N(N-1)}{2},
\ea 
where $\xi_j$ and $\eta_j$ are given by \eqref{E:potentialAdjoint}.
\newline

Moreover, the solutions of \eqref{E:condComm3} will come in pairs $R_1$ and $R_2=R^{\text{cc}}R_1$, where 
\ba
R^{\text{cc}}= \begin{pmatrix}
I_{k\times k} & 0 \\
0 & -I_{m\times m }
\end{pmatrix}, 
\ea
with $m=N^2-1-k$, and $k$ given above. The matrix
$R^{\text{cc}}$ represent complex conjugation when acting on the Lie algebra $su(N)$ with the basis
$\{i \lambda_j\}_{j=1}^{N^2-1}$ given in appendix \ref{sec:GeneralizedGellMannMatrices}. Exactly one of the solutions $R_1$ and $R_2$ will correspond to a $SU(N)$ basis transformation of the Higgs fields.

Finally, condition iii) will hold if the column vectors $\vec{\xi}$ and $\vec{\eta}$ are orthogonal to the nullspace of $E$.
\end{theorem}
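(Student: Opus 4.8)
The plan is to read off, from the definition of an $SO(4)_C$-symmetric potential, what it demands of the parameter triple $(\vec\xi,\vec\eta,E)$ of \eqref{E:potentialAdjoint}, and then feed this into the structural results already established. A potential is $SO(4)_C$-symmetric exactly when some $U\in SU(N)$, equivalently some $R=R(U)\in Ad_{SU(N)}\subset SO(N^2-1)$, moves the parameters by \eqref{E:transf2&4-} to a point where all bilinears $\widehat C_a$ have dropped out; since $K_a=2\widehat C_a$ for $a\le k=N(N-1)/2$, this "manifest" form means precisely $(R\vec\xi)_i=(R\vec\eta)_i=0$ for $i\le k$ together with $RER^T$ having its first $k$ rows and columns equal to zero. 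First I would dispatch the $E$-part: by Proposition \ref{P:NullityOfEAndFormOfR}, $RER^T$ has that block shape if and only if the first $k$ rows of $R$ form an orthonormal set of nullvectors of $E$, and — reading the $(\Rightarrow)$ half of that proposition — this already forces $\text{Nullity}(E)\geq k$, since $k$ rows of the orthogonal $R$ are automatically independent. This is exactly condition i) together with the row requirement of ii).

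Next I would characterize when such an $R$ is a genuine member of $Ad_{SU(N)}$. A nonzero matrix $R$ satisfies \eqref{E:condComm3} if and only if its associated linear map $r$ on $su(N)$ preserves the bracket, which is the content of \eqref{E:condComm}–\eqref{E:condComm2}; by Proposition \ref{t:SimpleLie} such an $r$ is then an automorphism, hence (Proposition \ref{T:auto}, $N>2$) either inner — in which case $R\in Ad_{SU(N)}$ by Proposition \ref{T:AdInnA} — or outer, $r(X)=UX^{\ast}U^{\dag}$; in either case $r$ is bijective, so $R$ is invertible and in fact orthogonal. The point that reconciles condition ii) asking only for \eqref{E:condComm3} with the definition demanding an honest $SU(N)$ transformation is the pairing statement in the theorem: in the ordered generalized Gell-Mann basis of appendix \ref{sec:GeneralizedGellMannMatrices}, complex conjugation is represented on $su(N)$ by $R^{\text{cc}}=\text{diag}(I_k,-I_m)$, because $i\lambda_a$ is a real matrix for $a\le k$ and a purely imaginary one for $a>k$; $R^{\text{cc}}$ is itself a bracket-preserving automorphism, so whenever $R$ solves \eqref{E:condComm3} so does $R^{\text{cc}}R$, and since $R^{\text{cc}}$ is an \emph{outer} automorphism (appendix \ref{sec:CCisAnOuterAutomorphismOfsu(N)}) while $\text{Out}(su(N))\cong\N{Z}_2$, exactly one of the two is inner, i.e.\ exactly one corresponds to a $U\in SU(N)$. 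The crucial bonus is that $R^{\text{cc}}$ acts as the identity on the first $k$ coordinates, so $R$ and $R^{\text{cc}}R$ share their first $k$ rows; hence replacing $R$ by $R^{\text{cc}}R$ changes none of i), the row part of ii), or iii).

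The equivalence is then assembled in two directions. For $(\Rightarrow)$: if $V$ is $SO(4)_C$-symmetric, pick a symmetrizing $R=R(U)\in Ad_{SU(N)}$; Proposition \ref{P:NullityOfEAndFormOfR} gives i) and the row part of ii); membership in $Ad_{SU(N)}$ gives \eqref{E:condComm3}, completing ii); and the vanishing of the first $k$ components of $R\vec\xi$ and $R\vec\eta$ is iii). For $(\Leftarrow)$: given $R$ satisfying i)–iii), replace it by $R^{\text{cc}}R$ if necessary so that $R\in Ad_{SU(N)}$ (which, as noted, preserves i)–iii)), and write $R=R(U)$ with $U\in SU(N)$. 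Then $R$ is orthogonal and its first $k$ rows are nullvectors of $E$, so by the $(\Leftarrow)$ half of Proposition \ref{P:NullityOfEAndFormOfR} the matrix $RER^T$ has its first $k$ rows and columns zero; and iii) gives $(R\vec\xi)_i=(R\vec\eta)_i=0$ for $i\le k$. Thus the parameter point $(R\vec\xi,R\vec\eta,RER^T)$ is manifestly $SO(4)_C$-symmetric, so $V$ is $SO(4)_C$-symmetric. The "Moreover" paragraph is just the pairing observation recorded above — the solutions of \eqref{E:condComm3} are permuted in pairs by left multiplication by $R^{\text{cc}}$, with exactly one member of each pair inner — and the "Finally" clause is immediate: by ii) the first $k$ rows of $R$ lie in the nullspace of $E$, so if $\vec\xi$ and $\vec\eta$ are orthogonal to that nullspace then $R_{ij}\xi_j=R_{ij}\eta_j=0$ for $i\le k$ with no further work.

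I expect the single real obstacle to be the inner-versus-outer automorphism bookkeeping in the second paragraph: one must argue convincingly that imposing only the commutator-preservation equations \eqref{E:condComm3}, rather than literal membership in $Ad_{SU(N)}$, loses no information, and that the correction $R\mapsto R^{\text{cc}}R$ never damages conditions i)–iii) precisely because $R^{\text{cc}}$ fixes the coordinate block those conditions constrain. Everything else is routine bookkeeping of the transformation rule \eqref{E:transf2&4-} together with direct appeal to Propositions \ref{P:NullityOfEAndFormOfR}, \ref{T:auto}, \ref{T:AdInnA} and \ref{t:SimpleLie}.
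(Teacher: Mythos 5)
Your proposal is correct and takes essentially the same route as the paper: it assembles Propositions \ref{P:NullityOfEAndFormOfR}, \ref{T:auto}, \ref{T:AdInnA} and \ref{t:SimpleLie} in the same way, including the key observation that $R^{\text{cc}}$ fixes the first $k$ coordinates so the inner/outer correction $R\mapsto R^{\text{cc}}R$ preserves conditions i)--iii). Your version is somewhat more explicit than the paper's about separating the two directions of the equivalence, but the underlying argument is the same.
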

\begin{proof}
As stated in \eqref{E:nullityEgeq}, the matrix $E$ of the quartic terms $V_E$ must have $\text{Nullity}(E)=l\geq k=N(N-1)/2$, and the $k$ first rows of $R$ must by 
proposition \ref{P:NullityOfEAndFormOfR}
be orthonormal nullvectors of $E$, for $E$ to be transformed by $R$ to the form $E'$ given by \eqref{E:EcorrForm}. 
The matrix $R$ must be a bijection, satisfy \eqref{E:condComm3} and not involve complex conjugation on the Lie algebra $su(N)$ with basis $\{i\lambda_j\}_{j=1}^{N^2-1}$ for $R$ to be an element of ${Ad}_{SU(N)}$, by \eqref{E:condComm2} and proposition \ref{T:AdInnA}. By proposition \ref{t:SimpleLie}, $R$ is bijective if it does not equal zero, and $R\ne 0$ since the $k$ first rows of $R$ are nullvectors of $E$.
If $R_1$ is a solution of  \eqref{E:condComm3},  $R_2=R^{\text{cc}}R_1$ will also be a solution, since $R^{\text{cc}}$ represents complex conjugation which is an automorphism on $su(N)$, and since $R_2$ has the same $k$ first rows as $R_1$, namely the chosen nullvectors of $R$.
Exactly one of these two solutions are elements of $Ad_{SU(N)}$, since it can be written as an inner automorphism of $su(N)$, the other will be an outer automorphism: If $R$ is an inner automorphism, we can write $r(X)=(R_{kj}x_j)(i\lambda_k)=U XU^\dag$, with $X=x_j(i\lambda_j)$, and then  $(R^{\text{cc}}_{lk} R_{kj}x_j)(i\lambda_l)= (UXU^\dag)^\ast$ for some $U\in SU(N)$,
is an outer automorphism. On the other hand, if $R$ is outer, then $R=R^{\text{cc}} R$ will be inner since two complex conjugations will give an ordinary similarity transformation of $X$.

By \eqref{E:custInK0terms} the necessary and sufficient conditions for having the custodial $SO(4)$ symmetry in the terms additional to $V_E$ in the potential is given by \eqref{E:custInK0terms2}. Eq.~\eqref{E:custInK0terms2} will hold if $\vec{\xi}$ and $\vec{\eta}$ are orthogonal to the nullspace of $E$,
since the $k$ first rows of $R$ are orthonormal nullvectors of $E$. 
If $l=k$,
then \eqref{E:custInK0terms2} will hold if and only if $\vec{\xi}$ and $\vec{\eta}$ are orthogonal to the nullspace of $E$, since
the $k$ first rows of $E$ then spans the whole nullspace of $E$. 
\end{proof}
 The $N(N-1)/2$ first columns of $R$ had to consist of orthonormal nullvectors of $E$, for $R$ to be of the form \eqref{E:Rform} necessary to transform the matrix $E$ into a manifestly $SO(4)_C$-symmetric matrix $E'$. When we are searching for a matrix $R$ as described in the theorem, $l$ concrete, orthonormal nullvectors of $E$ can be rotated by $O(l)$ transformations to find a matrix $R$ which satisfies the conditions of theorem \ref{T:tm}. This was discussed in connection with \eqref{E:Oltrafo}.

The system of equations given by \eqref{E:condComm3} is overdetermined. Permutations of the indices $a,b$ and $c$ give equivalent
equations, since the structure constants are totally antisymmetric in all indices for $su(N)$, cf.~e.g.~\cite{metha}. Moreover, if (at least) two indices 
among $a,b$ and $c$ are identical, the left hand side of \eqref{E:condComm3} will be zero, and the right hand side will be zero too: If, for instance $a=b$ the expression on the right hand changes sign if we interchange the indices $i$ and $j$, and hence the sums over $i$ and $j$ equal zero for each $k$. Hence, there are 
\ba\label{E:NumberEqs}
\binom{N^2-1}{3}=  \frac{1}{6} \left(N^2-3\right) \left(N^2-2\right) \left(N^2-1\right)
\ea
equations left, which equals $56$ equations for the 3HDM.
Moreover, there are $(N^2-1)^2$ elements in $R$, and if we subtract the $(N^2-1)\cdot N(N-1)/2$ elements associated with nullvectors of $E$, we end up with  $({1}/{2})(N-1)^2 (N+1) (N+2)$ variables, which equals $40$ for $N=3$. If we add the $\binom{k}{2}=({1}/{8}) (N-2) (N-1) N (N+1)$, 
where $k=N(N-1)/2$, variables $SO(k)$ rotations of the nullvectors generate (we here assume the nullity $l$ of $E$ is $k$), we get  totally $({1}/{8}) (N-1) (N+1) (N (5 N+2)-8)$ variables. In the 3HDM this corresponds to $43$ variables. The difference between the number of equations and variables is then 
\ba
\frac{1}{24} (N-1) (N+1) \left(N \left(4 N^3-35 N-6\right)+48\right)>0, 
\ea
for $N\geq 3$, and the system of equations \eqref{E:condComm3} is hence overdetermined.

In condition ii) in theorem \ref{T:tm}, we state that $R$ should not represent complex conjugation of elements of $su(N)$ expressed by the basis $B=\{i \lambda_j\}_{j=1}^{N^2-1}$ of appendix \ref{sec:GeneralizedGellMannMatrices}. Since the $N(N-1)/2$ first elements of the basis $B$ are real, while the other elements are purely imaginary, this meant that complex conjugation is represented by a matrix $R$ with $1$'s on the $N(N-1)/2$ first diagonal elements, and $-1$ on all the other diagonal elements, while $R$ is zero elsewhere. This matrix $R$ will satisfy \eqref{E:condComm3}, but is not an element of ${Ad}_{SU(N)}$, and does hence not correspond to a Higgs basis transformation. For the 3HDM, $R$ representing complex conjugation then reads
\ba\label{E:Rcc}
 R^{\text{cc}}= \begin{pmatrix} 
I_{3\times 3} & 0_{3\times 5} \\
0_{5\times 3} & -I_{5\times 5}
\end{pmatrix},
\ea
where $I_{n\times n}$ are $n\times n$ identity matrices and $0_{m\times n}$ are $m\times n$ zero matrices.

In the case $N=2$, theorem \ref{T:tm} also applies, although condition i) will imply condition ii) in this case. The reason for this is that $Ad_{SU(2)}=SO(3)$, and hence any $R \in SO(3)$ will satisfy 
\eqref{E:condComm3}, which will hold for all $R\in Ad_{SU(N)}$ also in the case $N=2$. And given a normalized nullvector $n_1$ of $E$, you can always construct a matrix $R\in SO(3)$ with $n_1$ as the first row. Then, since $R\in SO(3)=Ad_{SU(2)}$, \eqref{E:condComm3} will hold.
The fact that \eqref{E:condComm3} holds for all $R\in SO(3)$ can also be shown by an explicit calculation: For $N=2$, the structure constants
\ba
f^{abc}\propto \epsilon^{abc},
\ea
where $\epsilon^{abc}$ is the Levi-Civita symbol, and hence \eqref{E:condComm3} in this case becomes equivalent to
\ba\label{E:Ad_SU(2)cond}
\epsilon^{abc}=R_{ai}R_{bj}R_{ck}\epsilon^{ijk}.
\ea
But, similarly to the argument below \eqref{E:condComm2TestO_Simplified3HDM},
$R_{ai}R_{bj}R_{ck}\epsilon^{ijk}=R_{1i}R_{2j}R_{3k}\epsilon^{ijk}\epsilon^{abc} =
\det(R) \epsilon^{abc}=\epsilon^{abc}$, where the last equality is valid when $\det(R)=1$,
and hence \eqref{E:Ad_SU(2)cond} holds for all $R\in SO(3)$.

 On the other hand, condition ii) of theorem \ref{T:tm} implies condition i) for any $N$, and hence these two conditions are equivalent for $N=2$. (Condition i) is in any case just a first, simple necessary condition for $SO(4)_C$ symmetry.) 
But, for $N=2$ as for all $N$, the matrices $R$ which fulfil condition ii) may not satisfy condition iii). Hence, if there is a matrix $R$ which satisfy
both condition ii) and iii), then the potential is $SO(4)_C$-symmetric also for $N=2$, although condition ii) is more trivial in this case. These conditions now render the conditions given in \cite{Grzadkowski:2010dj}.

Finally, for $N=2$ both solutions $R_1$ and $R_2=R^{\text{cc}}R_1$ mentioned in theorem \ref{T:tm}, will correspond to a $SU(2)$ Higgs basis transformation, since complex conjugation is an inner automorphism of $SU(2)$.

\subsubsection{Spontaneous breaking of the custodial $SO(4)$ symmetry}
\label{sec:SpontaneousBreakingCustodialSymmetry}
We have referred to a NHDM potential with a custodial $SO(4)$ symmetry as $SO(4)_C$-symmetric.
In the presence of VEVs $v_n$ given by
\ba\label{E:VEVnhdm}
    \langle 0| \Phi_n |0\rangle = (0, v_n)^T,
\ea
the $SO(4)_C$ symmetry is broken or, more precisely, hidden. The complex VEVs $v_n$ here occurs in the lower elements of the doublets, due to electrical charge conservation. 
Given at least two non-zero VEVs in a manifestly $SO(4)_C$-symmetric potential, the least amount of symmetry breaking occurs in the case of vacuum alignment, i.e.~when all VEVs 
occur in the
same direction of the real quadruplets $\Phi_{n,r}$, cf.~\eqref{E:realQuadrNHDM}. 
This means that all VEVs can be written 
\ba
v_n= \tilde{v}_n e^{i\theta}, 
\ea
where $\tilde{v}_n$ and $\theta$ are real, and we can transform all VEVs to real VEVs without altering the parameters of the potential, by making an $U(1)$ phase transformation on all scalar fields simultaneously. 
Normally only $U\in SU(N)$ are considered as Higgs basis transformations since an overall $U(1)$ transformation does not affect the parameters of the potential. 
  But when we allow for $U\in U(N)$, i.e.~allow overall complex phases in addition to the $SU(N)$ Higgs basis transformations, vacuum alignment is equivalent to all VEVs being real.
Then, in case of real VEVs, $SO(4)_C$ symmetry is spontaneously broken down to $SO(3)_C$, with three broken generators \cite{Olaussen:2010aq}. 

If there is a Higgs basis where the potential is manifestly $SO(4)_C$-symmetric, where
$SO(3)_C$ at the same time is intact, then the potential is simultaneously
explicitly and ``spontaneously'' custodially symmetric, i.e.~custodially symmetric. This will happen if and only if there exists a matrix $R$ which satisfies condition ii) and iii) of theorem \ref{T:tm}, where the vacua 
 \ba
 (v_1, \ldots, v_N)^T=\langle 0| U(R)\vec{\Phi}|0 \rangle
 \ea
  are real (i.e.~aligned), for a Higgs basis transformation $U(R)\in U(N)$ associated with $R$ through \eqref{E:defR}. (There are $N$ matrices $U$ in $SU(N)$ associated with each $R$, and when $U(R)\in U(N)$ an additional complex phase may be present, without affecting the matrix $R$).  Condition i) of theorem \ref{T:tm} is, as we commented in the end of section \ref{sec:Necessary_and_sufficient_conditions}, a consequence of condition
 ii), and will hence be satisfied when condition ii) is satisfied. We will summarize the discussion in this section as a corollary of theorem \ref{T:tm}:
\begin{corollary}
\label{C:tm1C}
   Let $V(\vec{\Phi})$ be a NHDM potential. Then $V$ is custodially symmetric, i.e.~is $SO(4)_C$-symmetric with a $SO(3)_C$-symmetric vacuum in a basis where the potential is manifestly $SO(4)_C$-symmetric, if and only if there is a Higgs basis transformation $U(R)\in U(N)$ where $R$ satisfies theorem \ref{T:tm}, and where $\langle 0| U(R)\vec{\Phi}|0 \rangle$ is a real vector.
\end{corollary}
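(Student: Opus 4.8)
The plan is to obtain the corollary by bolting Theorem~\ref{T:tm} onto the alignment discussion of Section~\ref{sec:SpontaneousBreakingCustodialSymmetry}. Recall the working definition: $V$ is \emph{custodially symmetric} when there is a Higgs basis in which $V$ is manifestly $SO(4)_C$-symmetric \emph{and} in which the vacuum leaves $SO(3)_C$ intact. Theorem~\ref{T:tm} already characterizes the first half (explicit symmetry) in terms of a matrix $R$, so the only new content is translating "$SO(3)_C$ intact in the manifest basis" into "the transformed VEVs are real".

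First I would do the ($\Rightarrow$) direction. Assume $V(\vec\Phi)$ is custodially symmetric, so there is a unitary Higgs basis change $\vec\Phi\to\vec\Phi'=W\vec\Phi$ with $W\in U(N)$ after which $V$ has the manifest form \eqref{E:potNHDMSO4inv} and the vacuum is $SO(3)_C$-symmetric. Write $W=e^{i\theta}U$ with $U\in SU(N)$, and let $R=R(U)\in Ad_{SU(N)}$ be the associated matrix via \eqref{E:defR}; manifest $SO(4)_C$ symmetry of $V'$ means all $\widehat C$-bilinears and all $\widehat C$-containing quartic terms vanish, which is precisely conditions ii) and iii) of Theorem~\ref{T:tm}, with condition i) then automatic. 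For the vacuum part I would invoke \cite{Olaussen:2010aq}: in a manifestly $SO(4)_C$-symmetric potential an unbroken $SO(3)_C$ forces the VEVs to be aligned in the real-quadruplet sense of \eqref{E:realQuadrNHDM}, i.e.~$v_n=\tilde v_n e^{i\theta}$ with $\tilde v_n,\theta$ real; the overall $U(1)$ phase available in $U(N)$ (but not in $SU(N)$) then rotates all VEVs to real ones. Hence $W$ can be taken to be some $U(R)\in U(N)$ with $\langle 0|U(R)\vec\Phi|0\rangle\in\N{R}^N$, which is the claimed condition.

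For ($\Leftarrow$) I would run this backwards: given $U(R)\in U(N)$ with $R$ satisfying Theorem~\ref{T:tm} and $\langle 0|U(R)\vec\Phi|0\rangle$ real, Theorem~\ref{T:tm} says the Higgs basis transformation associated with $R$ (the member of the pair $R_1,R_2=R^{\text{cc}}R_1$ lying in $Ad_{SU(N)}$, lifted to $U(R)\in U(N)$ by an overall phase) brings $V$ to the manifest form \eqref{E:potNHDMSO4inv}; in that basis the VEVs are real, hence aligned, and by \cite{Olaussen:2010aq} a manifestly $SO(4)_C$-symmetric potential with aligned VEVs breaks $SO(4)_C$ only to $SO(3)_C$. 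So $V$ is $SO(4)_C$-symmetric with an intact $SO(3)_C$ in the manifest basis, i.e.~custodially symmetric.

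The step I expect to take the most care is the bookkeeping between $R$ and the Higgs-basis matrix $U(R)$: the $Ad$ map $SU(N)\to Ad_{SU(N)}$ has a $\N{Z}_N$ kernel, so $R$ fixes $U$ only up to a central phase, and passing to $U(N)$ introduces one further overall $U(1)$; one must verify that this phase freedom is exactly what converts "aligned VEVs" into "real VEVs" while leaving conditions i)-iii) untouched — which it does, since those conditions constrain only $R$ and are therefore blind to the overall phase. Beyond this, the argument is a straight repackaging of Theorem~\ref{T:tm} and the alignment fact, so I anticipate no further obstacle.
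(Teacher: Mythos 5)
Your proposal is correct and follows essentially the same route as the paper, which establishes the corollary via the discussion in Section~\ref{sec:SpontaneousBreakingCustodialSymmetry}: Theorem~\ref{T:tm} handles the explicit $SO(4)_C$ part, and the equivalence "vacuum alignment $\Leftrightarrow$ real VEVs under the extra $U(1)$ in $U(N)$" (with the $SO(3)_C$ statement cited from \cite{Olaussen:2010aq}) handles the vacuum part. Your explicit two-direction organization and the remark on the $\N{Z}_N$ kernel of the adjoint map merely make explicit what the paper leaves as a summary of its preceding discussion.
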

Finally, the custodial symmetry will be spontaneously broken if the potential is  $SO(4)_C$-symmetric, but there is no Higgs basis where the potential is manifestly $SO(4)_C$-symmetric and where all VEVs are real at the same time. 

\subsection{Examples}
\label{sec:Examples}
\subsubsection{Applying a necessary condition}
\label{sec:ExamplesApplyingANecessaryCondition}
We will now give examples of potentials that are not $SO(4)_C$-symmetric, by applying the necessary condition \eqref{E:condComm2TestO_Simplified3HDM}. Consider a 3HDM potential $V$ where the quartic terms $V_E=K_a E_{ab} K_b$ in \eqref{E:V_Edef}
 are specified by
\begin{align}
	\label{E:EexNonCust3HDM}
   E=\begin{pmatrix} 
 &{\bigX}& 0 & 0  & 0 \\
& &  \vdots & \vdots  & \vdots \\
0 & \ldots & 0 &0 & 0 \\
0 & \ldots & 0 & 0 &0\\
0 & \ldots & 0 & 0 &0 
\end{pmatrix}.
\end{align}
 Let $X$ in \eqref{E:EexNonCust3HDM} be any real, symmetric and invertible $5\times 5$ matrix.
In the 3HDM with our choice of basis for $su(3)$ (see appendix \ref{sec:GeneralizedGellMannMatrices}), the column vector consisting of the bilinears $K_a$ becomes
\begin{align}
	\vec{K}=2\left(\widehat{C}_{12},\widehat{C}_{13},\widehat{C}_{23},\widehat{B}_{12},\widehat{B}_{13},\widehat{B}_{23},\frac{\widehat{A}_1-\widehat{A}_2}{2},\frac{\widehat{A}_1+\widehat{A}_2-2\widehat{A}_3}{2\sqrt{3}}\right)^T,
\end{align}
where the three first bilinears are not manifestly $SO(4)_C$-symmetric. 
 Since $X$ is invertible, its 5 columns are
linearly independent, and hence the 5 first columns of $E$ are linearly independent, while the 3 last columns of $E$ are zero.
Hence the dimension of the columnspace, and the rank, is 5, and the nullity (i.e.~the dimension of the nullspace) is $8-5=3=l$. Three orthonormal nullvectors of 
$E$ are then 
\begin{align}\label{E:nullv3HDMex}
	\vec{e}_6 &= \left(0,0,0,0,0,1,0,0\right)^T \equiv \tilde{n}_1, \nn \\
	\vec{e}_7 &= \left(0,0,0,0,0,0,1,0\right)^T \equiv  \tilde{n}_2,  \nn \\
	\vec{e}_8 &=\left(0,0,0,0,0,0,0,1\right)^T \equiv  \tilde{n}_3.
\end{align}
By inserting the nullvectors of \eqref{E:nullv3HDMex} into the necessary condition for the custodial symmetry for the 3HDM \eqref{E:condComm2TestO_Simplified3HDM} (alternatively, the more general necessary condition
\eqref{E:condComm2TestO} can be applied), we get
\begin{align}
	f^{123} &= \pm (\tilde{n}_1)_i (\tilde{n}_2)_j (\tilde{n}_3)_k f^{ijk}\nn \\
&= \pm \begin{pmatrix} 0 \\ \vdots \\
0 \\ 1 \\  0 \\  0
\end{pmatrix}_i 
\begin{pmatrix} 0 \\ \vdots \\
0 \\ 0 \\  1 \\  0 
\end{pmatrix}_j 
  \begin{pmatrix} 0 \\ \vdots \\
0 \\ 0 \\  0 \\  1
\end{pmatrix}_k f^{ijk}=\pm f^{678}, 
\end{align}
which does not hold, since $f^{123}=1/2$, while $f^{678}=0$, 
cf.~appendix \ref{sec:StructureConstantsForThe3HDM}. 
Hence a 3HDM potential containing quartic terms 
given by $V_E=K_a E_{ab} K_b$ and \eqref{E:EexNonCust3HDM}, is not $SO(4)_C$-symmetric.

We can control that the necessary condition \eqref{E:condComm2TestO_Simplified3HDM} makes sense, 
by checking that evidently $SO(4)_C$-symmetric terms satisfy the condition. Let the matrix 
$E$ of $V_E$ now be defined by
\ba\label{E:EcorrForm3HDM}
   E=\begin{pmatrix} 
 0 & 0 & 0 &\ldots & 0 \\
0 & 0& 0 &  \ldots &0 \\
0 & 0 & 0 &\ldots & 0 \\
\vdots &\vdots &\vdots & {\bigXdown}& \\
0 & 0 & 0 & & 
\end{pmatrix},
\ea
which is of the form \eqref{E:EcorrForm}, and hence give manifestly $SO(4)_C$-symmetric terms $V_E$. The block $X$ is again a real, symmetric and invertible, but otherwise arbitrary $5\times 5$ matrix.
Then the nullspace of $E$ has dimension 3, and three orthonormalized nullvectors are
\begin{align}\label{E:nullv3HDMex2}
	\vec{e}_1 &= \left(1,0,0,0,0,0,0,0\right)^T \equiv \tilde{n}_1, \nn \\
	\vec{e}_2 &= \left(0,1,0,0,0,0,0,0\right)^T \equiv \tilde{n}_2,  \nn \\
	\vec{e}_3 &=\left(0,0,1,0,0,0,0,0\right)^T \equiv \tilde{n}_3.
\end{align}
Now \eqref{E:condComm2TestO_Simplified3HDM} yields
\ba\label{E:condComm2TestOOk3HDM}
     f^{123}=\det(O) (\tilde{n}_1)_i (\tilde{n}_2)_j (\tilde{n}_3)_k f^{ijk}=\pm f^{123}, 
\ea
which holds as long as $O\in SO(3)$, and hence $V_E$ given by \eqref{E:EcorrForm3HDM} satisfy the necessary condition \eqref{E:condComm2TestO_Simplified3HDM} for $SO(4)_C$ symmetry, as it should.
Moreover, theorem \ref{T:tm} will hold with $R=I$ for quartic terms $V_E$ given by \eqref{E:EcorrForm3HDM}, and hence the terms $V_E$ are $SO(4)_C$-symmetric also by theorem \ref{T:tm}. Here the nullvectors \eqref{E:nullv3HDMex2} are the three first rows of $R=I$.
\subsubsection{A $SO(4)_C$-symmetric 3HDM potential}
\label{sec:ACustodialSymmetricPotentialInThe3HDM}
We will now apply theorem \ref{T:tm} to show that a certain 3HDM potential is $SO(4)_C$-symmetric. Consider the 3HDM potential given by
\begin{align}\label{E:full3HDMexE}
	 E= \left(
\begin{array}{cccccccc}
 0 & \frac{1}{\sqrt{2}} & 0 & 1 & 0 & 1 & \frac{\sqrt{\frac{3}{2}}}{2} &
   -\frac{1}{2 \sqrt{2}} \\
 \frac{1}{\sqrt{2}} & 1 & \frac{1}{\sqrt{2}} & 2 \sqrt{2} & -1 & 2 \sqrt{2} &
   \frac{\sqrt{3}}{2} & -\frac{1}{2} \\
 0 & \frac{1}{\sqrt{2}} & 0 & 1 & 0 & 1 & \frac{\sqrt{\frac{3}{2}}}{2} &
   -\frac{1}{2 \sqrt{2}} \\
 1 & 2 \sqrt{2} & 1 & -\frac{1}{2} & \frac{1}{\sqrt{2}} & -\frac{1}{2} & 0 & 0
   \\
 0 & -1 & 0 & \frac{1}{\sqrt{2}} & 0 & \frac{1}{\sqrt{2}} & 0 & 0 \\
 1 & 2 \sqrt{2} & 1 & -\frac{1}{2} & \frac{1}{\sqrt{2}} & -\frac{1}{2} & 0 & 0
   \\
 \frac{\sqrt{\frac{3}{2}}}{2} & \frac{\sqrt{3}}{2} &
   \frac{\sqrt{\frac{3}{2}}}{2} & 0 & 0 & 0 & \frac{3}{4} & -\frac{\sqrt{3}}{4}
   \\
 -\frac{1}{2 \sqrt{2}} & -\frac{1}{2} & -\frac{1}{2 \sqrt{2}} & 0 & 0 & 0 &
   -\frac{\sqrt{3}}{4} & \frac{1}{4} \\
\end{array}
\right),
\end{align}
and where the other parameters dependent of Higgs basis transformations are given by
\ba\label{E:3HDMexOtherParameters}
\vec{\xi}&&=\left(
\begin{array}{cccccccc}
 \frac{1}{6}, & -\frac{1}{3 \sqrt{2}}, & \frac{1}{6}, & -\frac{1}{6}, & -\frac{1}{3
   \sqrt{2}}, & -\frac{1}{6}, & \frac{1}{2 \sqrt{6}}, & -\frac{1}{6 \sqrt{2}} \\
\end{array}
\right)^T, \nn \\
\vec{\eta}&&=\left(
\begin{array}{cccccccc}
 \frac{5 \sqrt{2}}{3}, & -\frac{2}{3}, & \frac{5 \sqrt{2}}{3}, & -\frac{2 \sqrt{2}}{3}, &
   2, & -\frac{2 \sqrt{2}}{3}, & \frac{1}{\sqrt{3}}, & -\frac{1}{3} \\
\end{array}
\right)^T,
\ea
 confer \eqref{E:potentialAdjoint}. Other parameters are arbitrary.
  As we shall prove in this section, this potential is $SO(4)_C$-symmetric. First we check that the nullity of $E\geq N(N-1)/2=3$, which it is, 
since it has exactly 3 eigenvalues that equal zero. Mathematica then gives us the following orthonormal nullvectors of $E$,
\begin{align}\label{E:ExOrthonormNullv}
\tilde{n}_1 &=	 \left(0 , 0 , 0 , 0 , 0 , 0 ,  \frac{1}{2}, \frac{\sqrt{3}}{2}\right)^T, \nn \\
\tilde{n}_2 &=  \left(0 , 0 , 0 , -\frac{1}{\sqrt{2}} , 0 , \frac{1}{\sqrt{2}} , 0 , 0\right)^T, \nn \\
\tilde{n}_3 &= \left( -\frac{1}{\sqrt{2}} , 0 , \frac{1}{\sqrt{2}} , 0 , 0 , 0 , 0 , 0\right)^T.  
\end{align}
We then check that these nullvectors satisfy the necessary condition for custodial symmetry, 
for the 3HDM given by \eqref{E:condComm2TestO_Simplified3HDM}, and find that the nullvectors satisfy
condition \eqref{E:condComm2TestO_Simplified3HDM}, for $\det{O}=1$. 
Thus the necessary condition \eqref{E:condComm2TestO_Simplified3HDM}
also holds for all $SO(3)$ rotations of the nullvectors. We now try to solve the equations
\eqref{E:condComm3} in theorem \ref{T:tm} for the choice \eqref{E:ExOrthonormNullv} of nullvectors.  
As given in \eqref{E:NumberEqs} and the following discussion for the 3HDM, there will be 56 distinct (but possibly dependent) equations, including one that holds according to the already satisfied necessary condition. On the other hand, there are 40 variables (we do not include $SO(3)$ rotations of the nullvectors in this first attempt). Applying Mathemathica's Solve-command, then gives us two solutions:
When we include the already fixed three first rows of $R$, which consist of the (transposed) nullvectors \eqref{E:ExOrthonormNullv}, one solution $R_1$ that solves \eqref{E:condComm3} reads
\begin{align}
R_1=	\left(
\begin{array}{cccccccc}
 0 & 0 & 0 & 0 & 0 & 0 & \frac{1}{2} & \frac{\sqrt{3}}{2} \\
 0 & 0 & 0 & -\frac{1}{\sqrt{2}} & 0 & \frac{1}{\sqrt{2}} & 0 & 0 \\
 -\frac{1}{\sqrt{2}} & 0 & \frac{1}{\sqrt{2}} & 0 & 0 & 0 & 0 & 0 \\
 0 & -1 & 0 & 0 & 0 & 0 & 0 & 0 \\
 \frac{1}{\sqrt{2}} & 0 & \frac{1}{\sqrt{2}} & 0 & 0 & 0 & 0 & 0 \\
 0 & 0 & 0 & -\frac{1}{\sqrt{2}} & 0 & -\frac{1}{\sqrt{2}} & 0 & 0 \\
 0 & 0 & 0 & 0 & -1 & 0 & 0 & 0 \\
 0 & 0 & 0 & 0 & 0 & 0 & \frac{\sqrt{3}}{2} & -\frac{1}{2} \\
\end{array}
\right).
\end{align}
The other solution $R_2$ equals
\begin{align}
	R_2 = R^{\text{cc}} R_1 ,
\end{align}
where $R^{\text{cc}}$ is given by \eqref{E:Rcc}, and represents complex conjugation on the
Lie algebra $su(3)$ with our basis $\{i\lambda_j\}_{j=1}^8$ given in appendix \ref{sec:GeneralizedGellMannMatrices}. We now check that the parameters $\vec{\xi}$ and
 $\vec{\eta}$ given by \eqref{E:3HDMexOtherParameters} satisfy condition iii) of theorem \ref{T:tm}, which they do for both matrices $R_1$ and $R_2$. Hence the potential given by \eqref{E:full3HDMexE} and \eqref{E:3HDMexOtherParameters} is $SO(4)_C$-symmetric by theorem \ref{T:tm}.

Both solutions $R=R_1$ and $R_2$ correspond to the same, manifestly $SO(4)_C$-symmetric matrix $E'$:\footnote{The matrix $R_3= R_1 R^{\text{cc}} $ will also transform $E$ into a manifestly $SO(4)_C$-symmetric $E'$, but alters
the signs of the two first rows of $R_1$, and is hence not a solution when we have chosen the three first rows to equal the nullvectors in \eqref{E:ExOrthonormNullv}.}
\begin{align}
	E'=RER^T=\left(
\begin{array}{cccccccc}
 0 & 0 & 0 & 0 & 0 & 0 & 0 & 0 \\
 0 & 0 & 0 & 0 & 0 & 0 & 0 & 0 \\
 0 & 0 & 0 & 0 & 0 & 0 & 0 & 0 \\
 0 & 0 & 0 & 1 & -1 & 4 & -1 & -1 \\
 0 & 0 & 0 & -1 & 0 & -2 & 0 & 1 \\
 0 & 0 & 0 & 4 & -2 & -1 & 1 & 0 \\
 0 & 0 & 0 & -1 & 0 & 1 & 0 & 0 \\
 0 & 0 & 0 & -1 & 1 & 0 & 0 & 1 \\
\end{array}
\right).
\end{align}
According to theorem \ref{T:tm}, only one of the solutions $R_1$ and $R_2$ will correspond to a $SU(N)$ basis transformation of the Higgs fields. 
Here it turns out that $R_1$ corresponds to a $SU(3)$ transformation of the Higgs fields, and is hence a matrix in $Ad_{SU(3)}$.\footnote{By applying Mathematica's NSolve-command for $R_2$ and solving for $U$ via \eqref{E:defR}. The solution space then becomes empty.}  This also means that $R_1$ represents an inner automorphism of $su(3)$, while $R_2$ corresponds to an outer automorphism of $su(3)$, i.e.~an inner automorphism combined with complex conjugation.
A matrix $U_1\in SU(3)$ which corresponds to $R_1$ through \eqref{E:defR} is given by
\ba
U_1 =
\left(
\begin{array}{ccc}
 -\frac{1}{\sqrt{2}} & 0 & \frac{1}{\sqrt{2}} \\
 -\frac{i}{\sqrt{2}} & 0 & -\frac{i}{\sqrt{2}} \\
 0 & i & 0 \\
\end{array}
\right).
\ea

There will be two other $SU(3)$ matrices which correspond to $R_1$, namely $\alpha U_1$ and $\alpha^2 U_1$ where
$\alpha =e^{\frac{2\pi i}{3}}$, i.e.~a third-root of unity: Since the center of $SU(3)$ is $Z(SU(3))=\{\alpha I:\alpha^3=1\}\cong \N{Z}_3$, the kernel of the adjoint action of $SU(3)$ is $\text{Ker}(Ad)\cong \N{Z}_3$, and hence $Ad_{SU(3)}\cong SU(3)/ \N{Z}_3$. Then each
$R\in Ad_{SU(3)}$ will correspond to three different $U\in SU(3)$ through \eqref{E:defR}. The same is valid in $SU(N)$, that is, $Ad_{SU(N)}\cong SU(N)/ \N{Z}_N$, and each $R\in Ad_{SU(N)}$ correspond to exactly $N$ different $U\in SU(N)$ via \eqref{E:defR}.

\subsubsection{The Ivanov-Silva model}
\label{sec:IvanovSilva}
The Ivanov-Silva model was given as a counter-example to the then widely believed, but erroneous claim that an explicitly $CP$-invariant NHDM necessarily has a real basis \cite{Ivanov:2015mwl}. A $SO(4)_C$-symmetric potential
must be explicitly $CP$-invariant, since a $SO(4)_C$-symmetric potential can be written in a basis where all coefficients are real, which in our formalism means no terms linear in $\widehat{C}_a$ are present in the potential \cite{Olaussen:2010aq}. On the other hand, $CP$-invariance does not necessarily imply $SO(4)_C$ symmetry. 
The Ivanov-Silva model is an example of the latter. Since there is no real basis for this model, there will always be terms linear in
$\widehat{C}_a$ present in the potential, and hence it is not $SO(4)_C$-symmetric. The model's violation of the custodial symmetry is also implicitly stated in a footnote in \cite{Ivanov:2015mwl}, which says that the model has no other symmetries than powers of an order-4, generalized $CP$ transformation $J$,
defined by
\ba\label{E:Jgcp}
   J:  \Phi_m \to X_{mn}\Phi_n^\ast, 
\ea
where
\ba
 X= \left(   \begin{array}{ccc}
        1&0&0 \\
        0&0&i\\
        0&-i&0
    \end{array} \right).
\ea

We will now confirm the result that the Ivanov-Silva model is not $SO(4)_C$-symmetric, by applying theorem \ref{T:tm}:

The potential of Ivanov and Silva's 3HDM model can be written in terms of bilinears as $V=V_0+V_1$, 
 with
\ba
  V_0&=& -m^2_{11}\widehat{A}_1-m_{22}^2(\widehat{A}_2+\widehat{A}_3) 
  +\lambda_1  \widehat{A}_1^2 + \lambda_2(\widehat{A}_2^2+\widehat{A}_3^2)
  \nn \\
       &+& \lambda_3 \widehat{A}_1 (\widehat{A}_2+\widehat{A}_3) +\lambda_3'  \widehat{A}_2 \widehat{A}_3  \nn \\
       &+&\lambda_4(\widehat{B}_{12}^2+\widehat{C}_{12}^2+\widehat{B}_{13}^2+\widehat{C}_{13}^2) + \lambda_4'(\widehat{B}_{23}^2+\widehat{C}_{23}^2) ,
\ea
  and
\ba
   V_1 &=& 2\lambda_5 (\widehat{B}_{13}\widehat{B}_{12}- \widehat{C}_{13}\widehat{C}_{12})
   + \lambda_6 (\widehat{B}_{12}^2-\widehat{C}_{12}^2-\widehat{B}_{13}^2+\widehat{C}_{13}^2) \nn \\
   &+&2 \text{Re}(\lambda_8) (\widehat{B}_{23}^2-\widehat{C}_{23}^2)
   -4 \text{Im}(\lambda_8)\widehat{B}_{23}\widehat{C}_{23} \nn \\
   &+&2 \text{Re}(\lambda_9) (\widehat{A}_2 \widehat{B}_{23} 
   -\widehat{A}_3 \widehat{B}_{23})
  -2 \text{Im}(\lambda_9) (\widehat{A}_2 \widehat{C}_{23}- \widehat{A}_3 \widehat{C}_{23}),
\ea  
with all parameters real, except $\lambda_8$ and $\lambda_9$.
Moreover, $V_0$, $V_1$ and all the parameters are the same as in the original article \cite{Ivanov:2015mwl}. The Ivanov-Silva potential is then given by \eqref{E:potentialAdjoint} with the following parameters:

{\tiny \ba\label{E:EIvanov}
&&{E}= \\
&&\left(
\begin{array}{cccccccc}
 \frac{1}{4} \left(\lambda _4-\lambda _6\right) & -\frac{\lambda _5}{4} & 0 & 0 &
   0 & 0 & 0 & 0 \\
 -\frac{\lambda _5}{4} & \frac{1}{4} \left(\lambda _4+\lambda _6\right) & 0 & 0 &
   0 & 0 & 0 & 0 \\
 0 & 0 & \frac{1}{4} \left(\lambda _{4}'-2 \Re\left(\lambda _8\right)\right) & 0
   & 0 & -\frac{1}{2} \Im\left(\lambda _8\right) & \frac{\Im\left(\lambda
   _9\right)}{4} & -\frac{1}{4} \sqrt{3} \Im\left(\lambda _9\right) \\
 0 & 0 & 0 & \frac{1}{4} \left(\lambda _4+\lambda _6\right) & \frac{\lambda
   _5}{4} & 0 & 0 & 0 \\
 0 & 0 & 0 & \frac{\lambda _5}{4} & \frac{1}{4} \left(\lambda _4-\lambda
   _6\right) & 0 & 0 & 0 \\
 0 & 0 & -\frac{1}{2} \Im\left(\lambda _8\right) & 0 & 0 & \frac{1}{4} \left(2
   \Re\left(\lambda _8\right)+\lambda _{4}'\right) & -\frac{1}{4}
   \Re\left(\lambda _9\right) & \frac{1}{4} \sqrt{3} \Re\left(\lambda _9\right)
   \\
 0 & 0 & \frac{\Im\left(\lambda _9\right)}{4} & 0 & 0 & -\frac{1}{4}
   \Re\left(\lambda _9\right) & \frac{1}{4} \left(\lambda _1+\lambda _2-\lambda
   _3\right) &E_{78}
   \\
 0 & 0 & -\frac{1}{4} \sqrt{3} \Im\left(\lambda _9\right) & 0 & 0 & \frac{1}{4}
   \sqrt{3} \Re\left(\lambda _9\right) &E_{78} &E_{88} \\
\end{array}
\right)   \nn
\ea }%
 with $E_{78}= \frac{\lambda _1-\lambda _2-\lambda
   _3+\lambda _{3}'}{4 \sqrt{3}}$, $E_{88}= \frac{1}{12} \left(\lambda _1+5 \lambda
   _2-\lambda _3-2 \lambda _{3}'\right)$ and where $\Re(\lambda)=\text{Re}(\lambda)$, while $\Im(\lambda)=\text{Im}(\lambda)$. The other parameters of the Ivanov-Silva potential are
\ba
   \eta_0&=&\frac{1}{6} \left(\lambda _1+2 \lambda _2+2 \lambda _3+\lambda _{3}'\right), \nn \\
\eta_1&=&\eta_2=\eta_3=\eta_4=\eta_5=\eta_6=0, \nn \\
\eta_7&=&\frac{1}{24} \left(2 \sqrt{6} \lambda _1-2 \sqrt{6} \lambda _2+\sqrt{6} \lambda
   _3-\sqrt{6} \lambda _{3}'\right), \nn \\
\eta_8&=& \frac{\eta_7}{\sqrt{3}}. 
\ea
The matrix $E$ given by \eqref{E:EIvanov} is generally not custodially symmetric, since it is generally not singular, and hence does not satisfy condition i) of theorem \ref{T:tm}, which
requires an eigenvalue $0$ of (at least) multiplicity $3$. To see this, set all parameters e.g.~to $1$, 
 with the result 
\ba\label{E:EIvanovNum}
E=
\left(
\begin{array}{cccccccc}
 0 & -\frac{1}{4} & 0 & 0 & 0 & 0 & 0 & 0 \\
 -\frac{1}{4} & \frac{1}{2} & 0 & 0 & 0 & 0 & 0 & 0 \\
 0 & 0 & -\frac{1}{4} & 0 & 0 & 0 & 0 & 0 \\
 0 & 0 & 0 & \frac{1}{2} & \frac{1}{4} & 0 & 0 & 0 \\
 0 & 0 & 0 & \frac{1}{4} & 0 & 0 & 0 & 0 \\
 0 & 0 & 0 & 0 & 0 & \frac{3}{4} & -\frac{1}{4} & \frac{\sqrt{3}}{4} \\
 0 & 0 & 0 & 0 & 0 & -\frac{1}{4} & \frac{1}{4} & 0 \\
 0 & 0 & 0 & 0 & 0 & \frac{\sqrt{3}}{4} & 0 & \frac{1}{4} \\
\end{array}
\right).
\ea
Then $\det(E) =1/65536$, and $E$ is not singular, and hence does not correspond to a $SO(4)_C$-symmetric potential.

Now expand the scalar fields around the vacuum, i.e.~let
\ba
   \Phi_j = \left(   \begin{array}{c} \phi_j^+ \\ \frac{1}{\sqrt{2}}  (v_j+\eta_j+ i \chi_j) \end{array}     \right),
\ea
 where $v_1$ is real and $v_2=v_3=0$ in the Ivanov-Silva model.
 Then the neutral mass matrix split in an $\eta$-sector and a $\chi$-sector, where the (tree-level) masses 
of the $\eta$'s are  
\ba
    m_{\eta_1}=2 m_{11}^2, \quad m_{\eta_2,\eta_3}=\frac{1}{2} \left(-2 m_{22}^2+ v_1^2(\lambda _3+\lambda _4) \mp v_1^2\sqrt{\lambda _5^2 +\lambda _6^2 }\right),
\ea 
and where $\eta_1$ is the SM Higgs.  The masses 
of the $\chi$'s are given by
\ba\label{E:ISmchi}
    m_{\chi_1}=0, \quad m_{\chi_2,\chi_3}=m_{\eta_2,\eta_3},
\ea 
and finally the charged masses are 
\ba
    m_{\phi^\pm_1}=0,\quad m_{\phi^\pm_2,\phi^\pm_3}=\frac{\lambda _3 v_1^2}{2}- m_{22}^2,
\ea
where the two non-zero masses are identical.

 The mass degeneration between the $\eta$-sector and the $\chi$-sector, given by \eqref{E:ISmchi}, is not the same mass degeneration that occurs in custodially symmetric potentials (i.e.~potentials where the VEVs are real in some basis where $SO(4)_C$ symmetry is manifest, cf.~corollary \ref{C:tm1C}) \cite{Olaussen:2010aq}: The latter mass degeneration is between the $\chi$- ($CP$-odd) sector and the charged sector, where the two sectors get identical masses. In contrast, the mass degeneration in the Ivanov-Silva model is only partial, and between other sectors. This mass degeneration is caused by the generalized $CP$ symmetry $J$ of the model, confer \eqref{E:Jgcp}, which both the potential and the vacuum respect.

\subsubsection{The necessary condition is not sufficient}
\label{sec:Thenecessaryconditionisnotsufficient}
 We will now conclude with a 3HDM example which shows the necessary condition for the custodial symmetry, given by  \eqref{E:condComm2TestO_Simplified3HDM}, 
as expected is not sufficient: 
The following nullvectors satisfy the necessary condition \eqref{E:condComm2TestO_Simplified3HDM},
\begin{align}\label{E:moteks}
\tilde{n}_1 &=	 \left(1 , 0 , 0 , 0 , 0 , 0 , 0, 0 \right)^T, \nn \\
\tilde{n}_2 &=  \left(0 , 0 , 0 , 0 , 0 ,0 , 1 , 0\right)^T, \nn \\
\tilde{n}_3 &= \left( 0 , 0 ,0 , \frac{1}{2} , \frac{\sqrt{3}}{2}  , 0 , 0 , 0\right)^T.  
\end{align}
Furthermore, if we apply Mathematica's NSolve-command to solve the full set of equations \eqref{E:condComm3}, with the nullvectors \eqref{E:moteks} as the three first rows of $R$, we get a solution space which is empty. Thus the necessary condition \eqref{E:condComm2TestO_Simplified3HDM} for $SO(4)_C$ symmetry, is not sufficient.

\section{Summary}
\label{sec:Summary}
	We started by organizing the NHDM bilinears that vary under $SU(N)$ Higgs basis transformations in a vector $K_a$, given by \eqref{E:Kalpha}, by putting the $N(N-1)/2$ custodial symmetry-violating bilinears first in $K_a$. These custodial symmetry-violating bilinears are (proportional to) bilinears of the type $\widehat{C}$, see \eqref{E:custViolBilinsInKa} and \eqref{E:bilinears}. We derived a Higgs basis-invariant necessary condition \eqref{E:condComm2TestO} for $SO(4)_C$ symmetry, that only involves the nullvectors of the matrix $E$ from the quartic part of the NHDM potential \eqref{E:potentialAdjoint}. In the case of the 3HDM with $\text{Nullity}(E)=3$, this necessary condition was simplified to \eqref{E:condComm2TestO_Simplified3HDM}, since if it holds for one choice of the nullvectors, it holds for all $SO(3)$ rotations of the nullvectors. The main result of this article, theorem \ref{T:tm}, gave us basis-invariant necessary and sufficient conditions for an explicit custodial symmetry in a general NHDM, $N \geq 3$. Corollary \ref{C:tm1C} in section \ref{sec:SpontaneousBreakingCustodialSymmetry} yielded corresponding conditions for a simultaneously custodially symmetric potential and vacuum. In section \ref{sec:Examples} we applied both the necessary condition and the necessary and sufficient conditions of theorem \ref{T:tm} on some 3HDM potentials. Here we showed that a certain family of 3HDM potentials, given by \eqref{E:EexNonCust3HDM} are not $SO(4)_C$-symmetric, since they do not fulfil the necessary condition \eqref{E:condComm2TestO_Simplified3HDM}. Moreover, we showed that another 3HDM potential, given by \eqref{E:full3HDMexE} and \eqref{E:3HDMexOtherParameters}, is $SO(4)_C$-symmetric since it fulfils the necessary and sufficient conditions given by theorem \ref{T:tm}. Finally, in section \ref{sec:IvanovSilva}, we also applied our methods to demonstrate that the Ivanov-Silva model is not $SO(4)_C$-symmetric.
	\appendix

	\section{Factorizable quartic terms}
\label{sec:FactorizableQuarticTerms}
Conditions for $SO(4)_C$ in the quartic terms of the NHDM potential can be formulated relatively easy if the quartic terms are factorizable. We will define the quartic terms to be factorizable if they can be written
\begin{align}\label{E:factorizable}
	V_\text{4}= \vec{\Phi}^\dag A \vec{\Phi}\; \vec{\Phi}^\dag B \vec{\Phi},
\end{align}
where $A$ and $B$ are Hermitian $N\times N$ matrices, and $\vec{\Phi}$ is the $N\times 1$ vector consisting of the $N$ Higgs doublets,
\begin{align}
	\vec{\Phi}=(\Phi_1,\Phi_2,\ldots,\Phi_N)^T.
\end{align}
 The general NHDM potential cannot be written this way, since $A$ and $B$ together contain $2N^2$ free parameters, while the quartic part of the general NHDM potential contains $\frac{1}{2}N^2(N^2+1)$ free parameters, which supersedes $2N^2$ for $N>1$.
 
Let $\mu$ be the (Hermitian) mass matrix of the potential, i.e.~write the quadratic terms of the potential as
\begin{align}
	V_\text{2}= \vec{\Phi}^\dag \mu \vec{\Phi}.
\end{align}
Then, if the quartic terms are factorizable as in \eqref{E:factorizable}, the
potential is $SO(4)_C$-symmetric if and only if there is a basis transformation 
\begin{align}\label{E:basisTrafo}
\vec{\Phi} \to \vec{\Phi}'=U\vec{\Phi}, \quad U\in SU(N),
\end{align}
such that the matrices $U^\dag A U$, $U^\dag B U$ and $U^\dag \mu U$ are simultaneously real:
The $SO(4)_C$-violating terms are terms involving one or two factors of bilinears of type 
$\widehat{C}$, and the bilinears of type $\widehat{C}$ are generated by the imaginary parts of the matrices $A$, $B$ and $\mu$.    

We will now give criteria for when a family of Hermitian matrices can be made simultaneously real
by similarity transformations. 
Let $\{A_i\}_{i=1}^k$ be a family of Hermitian $N\times N$ matrices. Then there is a 
$U\in SU(N)$ that simultaneously makes the matrices $\{A_i\}_{i=1}^k$ similar to real matrices, that is
\begin{align}
	U^\dag A_i U \quad \text{is real} \quad \forall i,\; 1\leq i \leq k.
\end{align}
if and only if there is a symmetric $W\in SU(N)$ such that
\begin{align}\label{E:FactorCriteria}
	A_i W=W A_i^T \quad \forall i,\; 1\leq i \leq k.
\end{align}
Proof. ($\Rightarrow$): Assume $U^\dag A_i U$ is real for all $i$, where $U\in SU(N)$. Then
 $U^\dag A_i U=(U^\dag A_i U)^\dag=(U^\dag A_i U)^T=U^TA_i^TU^\ast $. Hence  
 $A_i UU^T =UU^TA_i^T$ since $U^\ast U^T=I$. Let $W=UU^T$, and $W$ is unitary and symmetric.

($\Leftarrow$): If $W$ is unitary and symmetric, there exists an unitary, symmetric matrix $U$, such that $U^2=W$ \cite{Zhang:mt}. 
Then $A_i U U^T = U U^T A_i^T$ for all $i$ by the assumption \eqref{E:FactorCriteria}. Hence $U^\dag A_i U = U^T A_i^T U^\ast$,
which infers that $(U^\dag A_i U)^T=U^T A_i^T U^\ast =U^\dag A_i U$, which means $U^\dag A_i U$ is real
for all $i$.

This leads us to the following sufficient and necessary conditions for having $SO(4)_C$ symmetry in a potential with factorizable quartic terms: The potential is $SO(4)_C$-symmetric if and only if there is a symmetric $W\in SU(N)$ such that
\begin{align}\label{E:FactorCriteria2}
	X W=W X^T \quad \forall X \in \{A, B, \mu\}. 
\end{align}
Eq.~\eqref{E:FactorCriteria2} represents a set of $3N^2$ linear equations in 
$N^2+N$ variables, if we disregard the condition $W\in SU(N)$. Including the latter condition,
 eq.~\eqref{E:FactorCriteria2} consists of $3N^2$ non-linear equations in $N^2-1$ (the dimension of $SU(N)$) variables. In any case the set of equations is overdetermined, and generally have no solution, which reflects that the potential generally is not custodially symmetric. 
	
	\section{A basis for $su(N)$}
	\label{sec:GeneralizedGellMannMatrices}
	We will now define a basis $\{v_a\}_{j=1}^{N^2-1}=\{i\lambda_a\}_{j=1}^{N^2-1}$ for $su(N)$, appropriate for our purposes. The Lie algebra $su(N)$ consists of anti-Hermitian $N\times N$ matrices, i.e.~matrices $A$ with the property $A^\dag=-A$. Generalized Gell-Mann matrices are on the other hand Hermitian.
	The matrices $\lambda_a$ in our basis will be the same as the generalized Gell-Mann matrices given in e.g.~\cite{Maniatis:2015gma}, but their order will be different. We will order all the imaginary matrices first, corresponding to the $SO(4)_C$-violating bilinears $\widehat{C}$. 
	Let
	\begin{align}
		\vec{e}_1&=\left(1,0,\ldots,0\right)^T, \nn \\
		         &\vdots \nn \\
			\vec{e}_N&=\left(0,\ldots,0,1\right)^T.			
	\end{align}
	Then the $N\times N$ matrix $u=\vec{e}_m \vec{e}_n^{\,\dag}$ has elements $u_{mn}=1$ for fixed $m$ and $n$, and	$u_{kl}=0$ for all $(k,l)\ne (m,n)$.

Now let $1\leq m < n \leq N$, and $a=a(m,n)$ be defined as in \eqref{E:encoding}, and let
\begin{align}\label{E:CandB}
	\lambda_a &= -i \vec{e}_m \vec{e}_{n}^{\,\dag} + i \vec{e}_{n} \vec{e}_{m}^{\,\dag} \quad &\text{for}
	\quad  a = a(m,n), \nn \\
   \lambda_b &=  \vec{e}_m \vec{e}_{n}^{\,\dag} + \vec{e}_{n} \vec{e}_{m}^{\,\dag}  \quad &\text{for}
	\quad  b = a(m,n) +\frac{N(N-1)}{2}.   
\end{align}
	Here we get
	\begin{align}
		K_c=\text{Tr}(\tilde{K}\lambda_c)= \vec{\Phi}^\dag \lambda_c \vec{\Phi},
	\end{align}
	where
	\begin{align}
		K_a &= 2 \widehat{C}_{m(a),n(a)}  \quad \text{for} \quad 1\leq a \leq \frac{N(N-1)}{2}\equiv k, \\
		K_b &= 2 \widehat{B}_{m(a),n(a)}  \quad \text{for} \quad k+1 \leq b=k+a(m,n)  \leq 2k.
	\end{align}
	The bilinears $\widehat{C}$ and $\widehat{B}$ are hence ordered "lexicographically", e.g.~$\widehat{C}_{12},\widehat{C}_{13},\ldots,\widehat{C}_{1 N},\widehat{C}_{23},\widehat{C}_{24},\ldots, \widehat{C}_{N-1,N}$, and afterwards the bilinears of type $\widehat{B}$ in the same pattern. 
	
	Finally, we define the diagonal (and traceless) matrices,
	\begin{align}\label{E:linCombsA}
	\lambda_j =	\sqrt{\frac{2}{m(m+1)}}\left[\sum_{l=1}^m \vec{e}_{l} \vec{e}_{l}^{\,\dag} - m \vec{e}_{m+1} \vec{e}_{m+1}^{\,\dag}\right], 
	\end{align}
	{for} $1\leq m \leq N-1$ {and} $j =m+N(N-1)$.
	The bilinears $K_j=\text{Tr}(\tilde{K}\lambda_j)$ of \eqref{E:linCombsA} are different linear
	combinations of the bilinears $\widehat{A}_n$, cf.~\eqref{E:bilinears}, orthogonal to $K_0=\sqrt{\frac{2}{N}} \,\sum_{j=1}^{N}\widehat{A}_j$. We denote the matrices $\{\lambda_c\}_{c=1}^{N^2-1}$ constructed above, the alternatively ordered, generalized Gell-Mann matrices.
	
 \subsection{Structure constants for the 3HDM}
	\label{sec:StructureConstantsForThe3HDM}
	The structure constants $f^{ijk}$ of the alternative, generalized Gell-Mann matrices $\{\lambda_c\}_{c=1}^{N^2-1}$ are given by	
	\begin{align}
		\left[\lambda_i,\lambda_j \right]=f^{ijk}\lambda_k.
	\end{align}
	The structure constants corresponding to $su(3)$, relevant for the 3HDM, with the alternative Gell-Mann matrices given by \eqref{E:CandB} and \eqref{E:linCombsA}
	will be the same as the structure constants of the ordinary Gell-Mann matrices, but the indices will be changed due to the change of the ordering of the matrices $\lambda_c$. Our permutation of the
	original Gell-Mann matrices is $(4,1,7,5,2,6,3,8)$, which means that the ordinary Gell-Mann matrix $\lambda_1$ is $\lambda_4$ in our alternative order, and so on.
	The structure constants then become changed according to the permutation of the Gell-Mann matrices, which means that for the matrices of \eqref{E:CandB} and \eqref{E:linCombsA} the following hold:
	\begin{align}\label{E:structureCons3HDM}
		f^{147}&=-1, \quad f^{123}=f^{156}=f^{246}=f^{275}=f^{345}=f^{367}= \frac{1}{2}, \nn \\
		f^{285}&=f^{386}=\frac{\sqrt{3}}{2}.
	\end{align}
	The structure constant $f^{abc}$ is completely antisymmetric in all its indices, i.e.~interchanging two indices changes the sign of 
	$f$ \cite{metha}. All structure constants not derivable from \eqref{E:structureCons3HDM} through permutations of the indices, are zero.
	

\section{The inclusion $Ad_{SU(N)}\subset SO(N^2-1)$}
\label{sec:WhenisAd_{SU(N)}subsetSO(N^2-1)}
When the basis vectors of a Lie algebra are orthogonal relative to the inner product induced by the Killing form, the matrix $R(U)$ will be orthogonal. The Killing form $\kappa$ is defined $\kappa(X,Y) =\text{tr}(ad(X)ad(Y))$, where $ad(X)(Y)=[X,Y]$, and is a linear transformation (a matrix) on the Lie algebra $\mathfrak{g}$. Let $b=\{v_j\}_{j=1}^{N^2-1}$ be a basis for $\mathfrak{g}$. The Killing form is invariant under automorphisms of the Lie algebra $\mathfrak{g}$, i.e. $\kappa(r(X),r(Y))=\kappa(X,Y)$ for all $X,Y\in \mathfrak{g}$ for all automorphisms $r\in \text{Aut}(\mathfrak{g})$. Then, if 
\ba
\kappa(v_j,v_k)\propto \delta_{jk}, 
\ea
the invariance of $\kappa$ under $r$ for general $X,Y$ induce 
\ba
R_{kl}R_{km}=\delta_{lm},
\ea
 which means $R$ is orthogonal, where $R$ is the matrix associated with the automorphism $r$, given the basis $b$. For $\mathfrak{g}=su(N)$, the Killing form is 
\ba
\kappa(X,Y)=2N\text{tr}(XY).  
\ea
If $\mathfrak{g}=su(N)$ and we choose $b$ to be a basis generated by generalized Gell-Mann matrices, a "Gell-Mann basis" $b=\{i \lambda_j\}_{j=1}^{N^2-1}$, we get 
\ba
\text{tr}(i\lambda_j i\lambda_k)=-2\delta_{jk}, 
\ea
and hence the matrices of $Ad_{SU(N)}$ become orthogonal with our preferred basis $b$. Moreover, $R(I)=I$, $R(U)$ is a continuous function of $U$ and $SU(N)$ is connected, hence all matrices $R(U)$ will be contained in the identity component of the orthogonal group $O(N^2-1)$. Thus 
\ba
Ad_{SU(N)}\subset SO(N^2-1).
\ea 
As indicated by the above, there are bases where $Ad_{SU(N)}$ does not consist of orthogonal matrices. Consider a general change of Lie algebra basis from a Gell-Mann basis to another basis, given by $i\lambda_a \to i\lambda_a'=M_{ab}i\lambda_b$, where $M$ is a real and invertible matrix. Then the matrices of $Ad_{SU(N)}$ relative to the primed basis will be given by $R_{da}'(U)=M_{dc}R_{cb}M^{-1}_{ba}$, by applying \eqref{E:defR}. By choosing $M$ with a determinant which differs from $\pm 1$, we easily get examples of matrices $R'(U)\in Ad_{SU(N)}'$ not being orthogonal.	
\section{Complex conjugation as an automorphism of $su(N)$}
	\label{sec:CCisAnOuterAutomorphismOfsu(N)}
	Complex conjugation is an inner automorphism of $su(2)$ while it is an outer automorphism of $su(N)$, $N>2$: For $N=2$, complex conjugation will be implemented by a similarity transformation with \ba
U=
\begin{pmatrix}
    0 & -1 \\
    1 & 0 
 \end{pmatrix}. 
\ea
The outer automorphism group of $su(N)$ is isomorphic to the automorphism group of the Dynkin diagram $A_{N-1}$, which is $\N{Z}_2$ for $N>2$
and the trivial group $\N{Z}_1$ for $N=2$. 
This means  
 $\text{Out}(su(N))$ for $N>2$ consists of only one non-trivial element, namely complex conjugation: Complex conjugation is obviously a $\N{R}$-linear bijection on $su(N)$, and respects the commutator, so complex conjugation is an automorphism on $su(N)$. To see that complex conjugation never equals a (unitary) similarity transformation for $N>2$, consider the diagonal matrix
\ba
X_3=i\cdot \text{diag}(1,1-2)\in su(3).
\ea
 Assume that complex conjugation is a similarity transformation. Then there exists an $U\in SU(3)$, such that $UX_3 U^\dag =X_3^\ast$ holds. But similar matrices have the same determinant, and the determinant of $X_3$ and $X_3^\ast$ differs by a factor $(-1)$, and hence they can not be similar, and thus complex conjugation cannot be an inner automorphism on $su(3)$.

 For $su(N)$, $N>3$, assume again that complex conjugation on $su(N)$ equals a similarity transformation, and consider the $N\times N$ matrix
\ba
   X= 
	 \left(\begin{array}{cc}
		 X_3 & 0 \\
		  0 & 0
	 \end{array}\right),
\ea
 where $X_3$ is defined above and $X\in su(N)$ (it is anti-Hermitian).
By assumption, there should exist an $U\in SU(N)$ such that $Y^\ast=UYU^\dag$ for all $Y\in su(N)$.
The characteristic polynomial of $X$ is given by $\det(tI-X)=\det(tI_{3\times 3}-X_3)\cdot \det(tI_{N-3\times N-3})=\det(tI_{3\times 3}-X_3)\cdot t^{N-3}$, and by the same manner 
 the characteristic polynomial of $X^\ast$ is $\det(tI-X^\ast)=\det(tI_{3\times 3}-X_3^\ast)\cdot t^{N-3}$.
We then
calculate the difference between the characteristic polynomials of the, by assumption similar, matrices $X$ and $X^\ast$,
 \ba
  \det(tI-X)-\det(tI-X^\ast) =-4i t^{N-3},
\ea
a contradiction, since similar matrices should have the same characteristic polynomial.
This again means that complex conjugation is not an inner automorphism for $su(N)$.
	
\section*{Acknowledgments}
Dedicated to Marit Julie Aase.
\newline
\newline
The author also wishes to thank K.~Skotheim, M.~Kachelrie\ss, S.~Willenbrock, M.~Zhang, C.~C.~Nishi, E.~Straume, P.~Osland and R.~K.~Solberg for helpful communication.


\begin{thebibliography}{99}
\bibitem{Sikivie:1980hm}
  P.~Sikivie, L.~Susskind, M.~B.~Voloshin and V.~I.~Zakharov,
  ``Isospin Breaking in Technicolor Models,''
  Nucl.\ Phys.\ B {\bf 173} (1980) 189.
  doi:10.1016/0550-3213(80)90214-X

\bibitem{Patrignani:2016xqp} 
  C.~Patrignani {\it et al.} [Particle Data Group],
  ``Review of Particle Physics,''
  Chin.\ Phys.\ C {\bf 40}, no. 10, 100001 (2016).
  doi:10.1088/1674-1137/40/10/100001

\bibitem{Olaussen:2010aq}
  K.~Olaussen, P.~Osland and M.~Aa.~Solberg,
  ``Symmetry and Mass Degeneration in Multi-Higgs-Doublet Models,''
  JHEP {\bf 1107} (2011) 020
  doi:10.1007/JHEP07(2011)020
  [arXiv:1007.1424 [hep-ph]].


\bibitem{Solberg:2012au}
  M.~Aa.~Solberg,
  ``On the terms violating the custodial symmetry in multi-Higgs-doublet models,''
  J.\ Phys.\ G {\bf 40} (2013) 065001
  [arXiv:1207.5194 [hep-ph]].

\bibitem{Willenbrock:2004hu}
  S.~Willenbrock,
  ``Symmetries of the standard model,''
  hep-ph/0410370.

\bibitem{Nishi:2011gc}
  C.~C.~Nishi,
  ``Custodial SO(4) symmetry and CP violation in N-Higgs-doublet potentials,''
  Phys.\ Rev.\ D {\bf 83} (2011) 095005
  doi:10.1103/PhysRevD.83.095005
  [arXiv:1103.0252 [hep-ph]].

\bibitem{Grzadkowski:2010dj}
  B.~Grzadkowski, M.~Maniatis and J.~Wudka,
  ``The bilinear formalism and the custodial symmetry in the two-Higgs-doublet model,''
  JHEP {\bf 1111} (2011) 030
  doi:10.1007/JHEP11(2011)030
  [arXiv:1011.5228 [hep-ph]].

\bibitem{Haber:2010bw}
  H.~E.~Haber and D.~O'Neil,
  ``Basis-independent methods for the two-Higgs-doublet model III: The CP-conserving limit, custodial symmetry, and the oblique parameters S, T, U,''
  Phys.\ Rev.\ D {\bf 83} (2011) 055017
  doi:10.1103/PhysRevD.83.055017
  [arXiv:1011.6188 [hep-ph]].

\bibitem{Maniatis:2015gma}
  M.~Maniatis and O.~Nachtmann,
  ``Stability and symmetry breaking in the general $n$-Higgs-doublet model,''
  Phys.\ Rev.\ D {\bf 92} (2015) no.7,  075017
  doi:10.1103/PhysRevD.92.075017
  [arXiv:1504.01736 [hep-ph]].


\bibitem{Baker}
 A.~Baker,
  ``Matrix groups,''
Springer (2001).



\bibitem{gundogan}
 H.~G\"undogan, 
  ``The Component Group of the Automorphism Group of a Simple Lie Algebra and the Splitting of the Corresponding Short Exact Sequence,''
 Journal of Lie Theory 20 (2010), No. 4, 709--737.

\bibitem{chuah}
M.~Chuah and M.~Zhang,
  ``Outer automorphism groups of simple Lie algebras and symmetries of painted diagrams,''
Forum Mathematicum, January 2016.

\bibitem{Ivanov:2015mwl}
  I.~P.~Ivanov and J.~P.~Silva,
  ``$CP$-conserving multi-Higgs model with irremovable complex coefficients,''
  Phys.\ Rev.\ D {\bf 93} (2016) no.9,  095014
  doi:10.1103/PhysRevD.93.095014
  [arXiv:1512.09276 [hep-ph]].

\bibitem{metha}
M.L.~Metha, J.M.~Normand and V.~Gupta,
  ``A property of the structure constants of finite dimensional compact simple Lie algebras,''
Commun.~Math.~Phys.~90, 69-78 (1983)


	\bibitem{Zhang:mt} 
	F.~Zhang,
	``Matrix Theory,'' p.~152
	Springer (1999).
	
	
\end{thebibliography}
\end{document}